\documentclass{article}

\usepackage{arxiv}
\usepackage{amsthm}
\usepackage{caption}
\usepackage{graphicx} 
\usepackage{subcaption}
\usepackage{amsmath}
\usepackage{amsfonts}
\usepackage{array}
\usepackage{mathrsfs}
\usepackage{mdframed}
\usepackage{amssymb}
\usepackage{fancyhdr}
\usepackage{changepage}
\usepackage{url}
\usepackage{algpseudocode}
\usepackage{algorithm}
\usepackage[english]{babel}
\newtheorem{theorem}{Theorem}

\usepackage{textcomp}
\usepackage{textalpha}

\usepackage{url}

\newcommand{\chat}{\hat{\mathbf{c}}}
\newcommand{\normc}{\left\langle \mathbf{c} \right\rangle}
\newcommand{\normchat}{\left\langle \hat{\mathbf{c}} \right\rangle}

\newcommand{\knnq}{knn(\mathbf{q})}
\newcommand{\knnmu}{knn(\boldsymbol{\mu})}
\newcommand{\cnobold}{c}
\newcommand{\chatnobold}{\hat{c}}
\newcommand{\eqdef}{\overset{def}{=}}
\usepackage{booktabs}

\DeclareMathOperator*{\argmax}{argmax}
\DeclareMathOperator*{\argmin}{argmin}
\algdef{SE}[DOWHILE]{Do}{doWhile}{\algorithmicdo}[1]{\algorithmicwhile\ #1}%
\title{Dynamic Query Modification for Binary Locality Sensitive Hashing}
\author{
  Ben Claydon, Richard Connor, Alan Dearle \\
  University of St Andrews \\
  St Andrews, Scotland\\
  \texttt{\{bc89,rchc,al\}@st-andrews.ac.uk} \\   
}
\date{May 2026}

\begin{document}

\maketitle

\begin{abstract}
    Our context of interest is how binary locality sensitive hash (LSH) functions can be used to solve the approximate near neighbour (ANN) problem in hyperspherical spaces, which seeks to find the $k$ closest elements of some dataset $X \subset \mathbb{R}^d$ to some further point $\mathbf{q} \in \mathbb{R}^d$ presented as a query. Binary locality sensitive function families $\mathcal{H}$ are sets of functions each with signature $h: \mathbf{x} \in \mathbb{R}^d \rightarrow \{ 0, 1 \}$. A function is locality sensitive if and only if the output of the function is more likely to be equal (a `hash collision') if two close vectors are used as input than if two far vectors are used. 
    
    A data structure can be built by generating binary hash codes for each member of $X$, which are generated by drawing and applying one or more functions from $\mathcal{H}$. When $\mathbf{q}$ is presented as a query, the same set of functions are applied to it and those elements of $X$ with equal binary hash codes retrieved.

    In this paper we introduce \emph{dynamic query modification}. This procedure changes $\mathbf{q}$ at query time using information acquired during search to approximate the point $\normc$, which is the $\ell_2$-normalised centroid of the $k$-nearest neighbour set of $\mathbf{q}$. By theoretical and experimental analysis we demonstrate $\normc$ has two significant advantages when compared to $\mathbf{q}$. Firstly, we show that the hash function output of $\normc$ is likely to have a higher average collision probability with the $k$-nearest neighbours than the hash output of $\mathbf{q}$. Secondly, in our experiments, we observe no instances of $\normc$ failing to collide with any member of the near-neighbour set; a property which we demonstrate is not true for $\mathbf{q}$.

    To demonstrate the efficacy of the technique, we define a novel structure MQ-Forest, a modified version of RP-Forest. Both are binary LSH-based ANN mechanisms, but MQ-Forest dynamically estimates $\normc$ during the query process. We show that MQ-Forest reduces both build and query times by up to $40\%$ when measured over several large, high-dimensional benchmark datasets.
\end{abstract}

\section{Introduction}

Searching collections of large, high-dimensional data is an increasingly important task in modern applications. Our context of interest is finding an approximation of the $k$ closest elements of some dataset $X \subset \mathbb{R}^d$ to some further point $\mathbf{q} \in \mathbb{R}^d$ presented as a query.  We exclusively consider hyperspherical spaces, which are those where all vectors have an $\ell_2$-norm of $1$. Finding the exact $k$ nearest neighbours to $\mathbf{q}$ in high-dimensional datasets is known to require a similarity comparison between $\mathbf{q}$ and each member of $X$ due to the `curse of dimensionality'. However, scalable solutions to such queries are required by many applications. Approximate near neighbour (ANN) solutions allow search in sublinear time.

One technique used by algorithms in this category is locality sensitive hashing (LSH). Binary LSH functions accept a point $\mathbf{x} \in \mathbb{R}^d$ as input, and return a binary output value. They possess the property that the output of the function when applied to two near points is more likely to be equal (a `hash collision') than when the function is applied to two far points.

It may seem intuitive that the query sits at the centre of $\knnq$. We show that this is not typically the case which introduces significant pathologies for LSH-based mechanisms. This suggests the surprising possibility that there exists an input to LSH functions which more effectively retrieves $\knnq$.

In this paper, we describe a significant improvement which can be made to LSH-based ANN algorithms based on some novel mathematical observations on high-dimensional spaces. We make the following observations:

\begin{itemize}
    \item In Section \ref{sec:theory}, we demonstrate the existence of a set $\Phi \subset \mathbb{R}^d$. Each member of $\Phi$ has the property that, when used as input to a locality sensitive hash function, they have an equal or higher probability than $\mathbf{q}$ of hash colliding with near neighbours of $\mathbf{q}$. If an element of $\Phi$ can be identified and queried in place of $\mathbf{q}$, the performance of LSH-based ANN algorithms can be improved as near neighbours of $\mathbf{q}$ will be retrieved with a greater frequency.

    \item In Section \ref{sec:application_to_forest}, we provide a practical and fast method to compute a vector likely to be a member of $\Phi$ at query time, which can be used to accelerate the performance of binary LSH-based mechanisms.
        
    \item In Section \ref{sec:experiments}, we apply the methods described in Section \ref{sec:application_to_forest} to a popular binary LSH-based ANN algorithm: RP-Forest. We validate the properties of our query modification technique by contrasting the behaviour of a modified and unmodified version of RP-Forest over large, high-dimensional benchmark datasets. We show that our modification to the algorithm results in reduced build and query times when compared to a baseline implementation. 
\end{itemize}

\section{Related Work}

\subsection{Locality Sensitive Hashing}
\label{sec:lsh}
LSH was introduced by Indyk and Motwani \cite{Indyk_Motwani_1998} as a method for searching high-dimensional spaces. A family of functions $\mathcal{H}$ is locality sensitive if and only if there exists a $p_1 > p_2$ and $t_1 < t_2$ which satisfy inequalities \ref{eq:threshold_p1} and \ref{eq:threshold_p2} for all functions $h \in \mathcal{H}$. A corollary of this property is that for any two points $\mathbf{x}, \mathbf{y}$, the probability $\Pr(h(\mathbf{x}) = h(\mathbf{y}))$ decreases monotonically as $||\mathbf{x} - \mathbf{y}||_2$ increases.

\begin{alignat}{2}
    \Pr(h(\mathbf{x}) = h(\mathbf{y})) &\geq p_1 \quad &&| \quad ||\mathbf{x} - \mathbf{y}||_2 \leq t_1 \label{eq:threshold_p1} \\
    \Pr(h(\mathbf{x}) = h(\mathbf{y})) &\leq p_2 \quad &&| \quad ||\mathbf{x} - \mathbf{y}||_2 > t_2 \label{eq:threshold_p2}
\end{alignat}

Charikar introduced a binary LSH function for Euclidean spaces using randomly generated hyperplanes which pass through the origin \cite{Charikar_2002}. A hash value of $1$ is assigned to a point $\mathbf{x}$  if it is above the hyperplane, or $0$ otherwise. This LSH function family is almost identical to the node subdivision procedure found in Random Projection Forests (see Section \ref{sec:rp_tree}); the latter only differs from Charikar's original formulation by shifting the hyperplane a random amount from the origin.

\subsubsection{LSH Approximate Near Neighbour Algorithms}
LSH functions are often used to create data structures which solve the ANN problem. Locality-sensitivity based ANN algorithms can be viewed abstractly in a manner described by Lv \cite{Lv_Josephson_Wang_Charikar_Li}:

\begin{quote}
    The basic LSH indexing method processes a similarity search, for a given query q, in two steps. The first step is to generate a candidate set by the union of all buckets that query q is hashed to. The second step ranks the objects in the candidate set according to their distances to query object q, and then returns the top K objects. \\
    $\cdots$\\
    By concatenating multiple LSH functions, the collision probability of far away objects becomes very small ($p_2$), but it also reduces the collision probability of nearby objects ($p_1$). As a result, multiple hash tables are needed in order to find most of the nearby objects.
\end{quote}

\subsection{RP-Trees and RP-Forests}
\label{sec:rp_tree}

The RP-Tree data structure was first introduced by Dasgupta and Freund as a variant of the KD-Tree suitable for searching high-dimensional spaces \cite{rptree}. To build an RP-Tree, a node is initialised which forms the root of the tree, and all points in a dataset $X \subset \mathbb{R}^d$ are assigned to this node. Thereafter, a tree is constructed by repeatedly subdividing any node containing a number of points greater than some user-defined parameter $n_s$ into two child nodes.

To subdivide a node, any point $\mathbf{x}$ contained in the node such that $\mathbf{w} \cdot \mathbf{x} > a$, for some plane $\mathbf{w}$ and some real value $a$, are assigned to the left child, and all other points to the right child. The value $a$ is uniformly randomly generated with upper and lower bounds equal to the greatest height above and below $\mathbf{w}$ of any point contained in the node. When $a=0$, this hash function is analogous to Charikar's hyperplane LSH.

The query procedure of an RP-Tree begins at the root node, where the predicate $\mathbf{w} \cdot \mathbf{q} > a$ is computed using the same instance of $\mathbf{w}$ and $a$ which split the root node. If this predicate is true, then the left subtree is recursively searched, or the right subtree is recursively searched if not. This process continues until a leaf node is encountered, and the objects contained therein are retrieved as a `candidate set'. Only points in the candidate set will be measured for similarity to $\mathbf{q}$, and the $k$ closest returned as the approximate near neighbour set.

As this paper discusses two separate families of binary LSH function, we refer to $\mathcal{H}_{rp}$ as the LSH schema used by RP-Forest, and $\mathcal{H}_{c}$ as the LSH schema defined by Charikar. We use $\mathcal{H}$ to refer to either function family.

RP-Forest is an ensemble of several independently constructed RP-Trees. When a query is presented, each RP-Tree is searched, and a candidate set is formed by taking the union of all objects retrieved from the trees. We use Yan et al.'s formulation of RP-Forest, laid out in Algorithm 1 of their study \cite{Yan_Wang_Wang_Wang_Li_2018}. Applications include studies on gene data \cite{Gondara_2015} \cite{Zhang_2011}, pose estimation \cite{Lee_Yang_Oh_2015} \cite{Lee_Yang_Oh_2019}, and outlier detection \cite{Tan_Yang_Rahardja_2022}.

\subsection{Query Modification}

Here we review methods of modifying a user-provided query vector to permit the accelerated retrieval of relevant elements from a large collection, which is also the aim of the novel query modification algorithm we present.

\subsubsection{Multiprobe LSH}
\label{sec:multiprobe}
There are two forms of multiprobe LSH: entropy-based \cite{Lv_Josephson_Wang_Charikar_Li} and those which are dependent on LSH bin geometry such as the schema introduced by Andoni et al. for the cross-polytope LSH family \cite{Andoni_Indyk_Laarhoven_Razenshteyn_Schmidt_2015}. We will focus on the former, as it is a technique which relies on truly modifying the query vector rather than an additional step in an LSH-based query mechanism.

Entropy-based multiprobe is the process of perturbing a query vector by a random Gaussian vector in order to change its hash value. If the degree of perturbation is scaled appropriately, then a new vector is generated which is similar to the original query but hashes to a different value. Additional candidates may be retrieved from this new hash bin, as it is also likely to contain items which are similar to $\mathbf{q}$. In this scenario a new hash table need not be searched to find additional relevant objects, which reduces the amount of space and time required by the LSH-based ANN algorithm.

The methods we present also rely on modifying the query to raise the likelihood of retrieving similar objects to the query by identifying and using a point in $\Phi$. However, we exploit information gained during a search procedure, as opposed to an entirely random perturbation. In this way, our query modification method is complementary to multiprobe LSH as both techniques can be used simultaneously. 

\subsubsection{Relevance Feedback}

The idea of modifying a query to permit the retrieval of more relevant objects from a large collection is well established as `relevance feedback' in the domain of information retrieval. Relevance feedback attempts to solve the problem of different users of an information retrieval (IR) system having different needs. Results retrieved by an IR system may be relevant to some users, but not others depending on the requirements of their query.

To solve this problem, users are asked to provide relevance feedback on each member of the set of results retrieved by the IR system. This feedback can take the form of the binary label `relevant' or `not relevant', or some numerical score of relevance \cite{Zhou_Huang_2003}. Upon receiving this feedback, the IR system will formulate a new query, and perform another search using it. This new query should somehow be more similar to the `relevant' documents, and dissimilar to the `irrelevant' documents.

One classical example of a method to modify a query vector using relevance feedback is Rocchio's algorithm \cite{rocchio1971relevance}. This algorithm modifies the query vector in the following way:

\begin{equation}
    \mathbf{u} = a \mathbf{q} + b \frac{1}{D_r} \sum_{\mathbf{r} \in D_r} \mathbf{r} - c \frac{1}{D_{nr}} \sum_{\mathbf{n} \in D_{nr}} \mathbf{n}
\end{equation}

where $a, b, c$ are some weights, $D_r$ is the set of relevant items as judged by the user, and $D_{nr}$ are the irrelevant items. Rocchio's algorithm is conceptually simple: it moves the query vector closer to the centroid of the set of relevant objects, and away from the centroid of irrelevant objects. 

Rocchio's algorithm is applied iteratively by generating several sets of results and incorporating new user feedback for each. Examples of systems which use relevance feedback in this way are \cite{Vadicamo_Scotti_Dearle_Connor_2025} and \cite{Claydon_Connor_Dearle_Vadicamo_2025}. This iterative process will result in a query which is personalised to the user's specific requirements.

\subsubsection{Pseudo-Relevance Feedback}
Relevance feedback assumes that there is a human in the loop, who guides the query as it progresses. Some IR systems utilise \emph{pseudo-relevance feedback} (PRF), which assumes the most similar documents to the query are `relevant' and others are `not relevant'. This technique permits the refining of queries automatically, without additional human input.

Examples of use of this technique include local context analysis \cite{Xu_Croft_2000}, which is a method of applying PRF to document retrieval. After results are retrieved by the IR system, local context analysis automatically forms a new `expanded' query by selecting new keywords based on those which frequently occur in retrieved documents. An expanded query is a query which has been modified to include additional relevant information. Results show that this expanded query retrieves more relevant documents than the original.

This technique can also be used in dense vector spaces, such as the ones we study. Kuzi et al. studied the retrieval of documents using expanded queries \cite{word_expansion}. Instead of an iterative method, queries were expanded using similar words as judged by a Word2Vec \cite{Mikolov_Chen_Corrado_Dean_2013} word embedding model trained on the document corpus. This form of PRF was also shown to produce a query which improves retrieval of relevant documents.

Faggioli et al \cite{Faggioli_Ferro_Perego_Tonellotto_2024} presented a learning-based approach to form an expanded query by selecting some small number of important dimensions of the original query embedding, and setting the rest to 0. By modifying the query in this way, the authors showed that more relevant documents are retrieved than when the original query is used. Furthermore, query expansion methods for LSH algorithms have been attempted \cite{Cui_Li_Zhu_Li_Zhang_2024} \cite{Huang_Feng_Zhang_Fang_Ng_2015} \cite{Kuo_Chen_Chiang_Hsu_2009} with learning-based approaches. 

\section{Properties of Generic LSH Mechanisms}

Any family of locality sensitive functions must obey Inequalities \ref{eq:threshold_p1} and \ref{eq:threshold_p2}. In the context of the ANN problem, the property of interest is \emph{what fraction} of $\knnq$ collide with $\mathbf{q}$. However, the properties of an LS function family make no such guarantees in the latter context.

In previous work, we have shown how this lack of guarantee can hinder binary LSH-based ANN mechanisms significantly \cite{sisap25}. There, we introduced the notion of a `hash failure', and demonstrated they occur frequently. A hash failure occurs when $\mathbf{q}$ hash collides with only a small fraction of its near neighbours. A hash failure can occur because, even if the probability of hash collision between $\mathbf{q}$ and each of its near neighbours is individually high, these probabilities may be \emph{non-independent} and therefore strongly correlated since independence is not a property guaranteed by an LS function family.

The existence of such failure cases particularly impacts the probability of hash collision between \emph{compound} hash functions. Compound hash functions are formed from the combination of independently drawn hash functions. In the context of the binary hash functions we study, a compound hash function generates a single binary string by concatenating the output of multiple independent binary LSH functions. A hash collision between two compound hash functions occurs when these binary strings are equal. 

A compound hash function collides with a number of elements of the $k$ nearest neighbour set proportionally to the \emph{product} of the recall of each constituent binary hash function. Therefore, the failure of \emph{any} individual hash function is not tolerable.

\begin{figure}
    \centering
    \begin{subfigure}{0.45\textwidth}
        \centering
        \includegraphics[width=\linewidth]{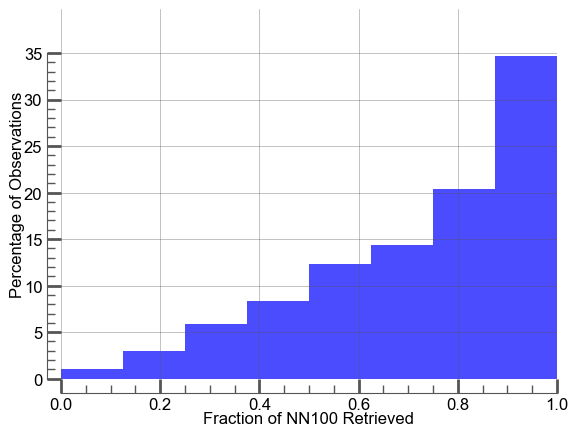}
        \caption{Using $\mathbf{q}$ as input to binary hash function}
        \label{fig:dino2_q_example_preview}
    \end{subfigure}
    \hfill
    \begin{subfigure}{0.45\textwidth}
        \centering
        \includegraphics[width=\linewidth]{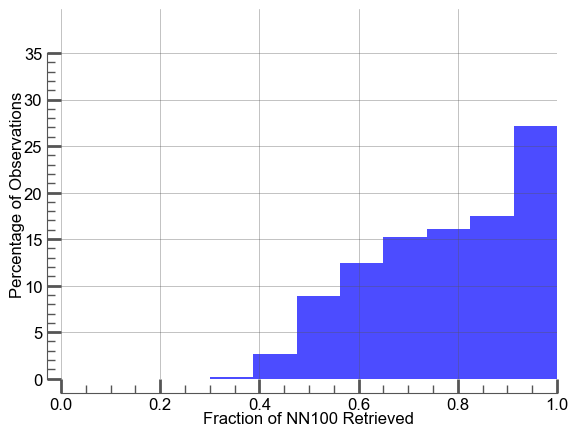}
        \caption{Using $\normc$ as input to binary hash function}
        \label{fig:dino2_c_example_preview}
    \end{subfigure}
    \caption{Fraction of points in $knn(\mathbf{q})$ which hash collide with $\mathbf{q}$ and $\normc$ respectively. Measured by applying $1000$ random functions from $\mathcal{H}_{rp}$ to $100$ random queries, over the MirFlickr Dino2 dataset. It is possible for $\mathbf{q}$ to collide with only a small fraction of $\knnq$, whereas this is not observed for an element $\normc \in \Phi$.}
    \label{fig:hash_collision_example}
\end{figure}

Figure \ref{fig:dino2_q_example_preview} shows the distribution of $\knnq$ collision rates after a large number of binary hash functions are applied to a dataset of image embeddings, with random objects from the dataset selected to use as queries. As can be seen, many functions retrieve a very small fraction of the near neighbours. Figure \ref{fig:dino2_c_example_preview} shows the same observations, where an element $\normc \in \Phi$ is used in place of $\mathbf{q}$%
\footnote{We will demonstrate how to derive $\normc$ and $\Phi$ in Section \ref{sec:query_quality}.}. A fuller explanation of this experiment, and results on other datasets, are provided in Section \ref{sec:eliminating_hash_failures}. Unlike $\mathbf{q}$, we empirically observe no instances of hash failure when this point is used as input to LSH functions.



\section{Query Modification}
\label{sec:theory}





\subsection{Defining Modified Query Quality}
\label{sec:query_quality}

We define the set $\Phi$ as those elements of $\mathbb{R}^d$ whose hashes are equally or more likely to collide with $\knnq$ than $\mathbf{q}$. We define a measure of this likelihood as the \emph{average collision probability} between some point $\mathbf{u} \in \mathbb{R}^d$ and each member of $\knnq$:

\begin{equation}
    ACP(\mathbf{u}, \knnq)
    =
    \frac{1}{k}
    \sum_{i=0}^{k-1}
    \Pr_{h \sim \mathcal{H}}
    \left[
        h(\mathbf{u}) = h(\mathbf{x}_i)
    \right],
    \qquad \mathbf{x}_i \in \knnq .
    \label{eq:average_collision_probability}
\end{equation}

We more formally define the set $\Phi$ to be those points where $ACP(\boldsymbol{\phi} \in \mathbb{R}^d \; | \; ||\boldsymbol{\phi}||_2 = 1) \geq ACP(\mathbf{q})$. In Appendix \ref{sec:statistical_model} we describe a statistical model which computes the ACP function for the family $\mathcal{H}_{rp}$, for arbitrary values of $\mathbf{u}$ and $\knnq$. Furthermore, in Section \ref{sec:proof_a_is_0}, we also show that this model can be used to compute the ACPs for the family $\mathcal{H}_c$ as well. 

We further define a member $\normc \in \Phi$ which has, to a first-order approximation, the highest ACP of any member of $\Phi$. In Appendix \ref{sec:proofs}, we demonstrate the $\ell_2$-normalised geometric centroid of the set $\knnq$ meets this condition. Moreover, this point is extremely cheap to compute given an arbitrary set $\knnq$, as we will go on to show.

\subsubsection{`Hash Failure' Cases}
\label{sec:off_diag_properties}

Maximising average collision probability is necessary but not sufficient to ensure good search outcomes. Hash failure cases must also be rare to ensure the reliable performance of compound hash functions. To prevent hash failure, we require a lack of correlation among outputs of a single hash function as applied to each element of $\knnq$. We refer to this as an \emph{average correlation} value. This is because hash failures occur when the output of a single function is equal when applied to most members of $\knnq$ due to this correlation, and $\mathbf{q}$ hashes to a different value.

In Appendix \ref{sec:off_diag_proof}, we provide a theoretical explanation of why hash failures occur, as well that hash failure cases are empirically eliminated in our experiments when $\normc$ is used as the input to hash functions. We show that the average correlation depends on only three geometric properties: the inter-point distances between members of $knn(\mathbf{q})$, the distance between $\mathbf{u}$ and each member of $knn(\mathbf{q})$, and the height of $\mathbf{u}$ above $\mathbf{w}$. As the inter-point distances between members of $knn(\mathbf{q})$ are a fixed quantity, and the height of $\mathbf{u}$ above $\mathbf{w}$ cannot be controlled as $\mathbf{w}$ is randomly generated, we may only influence the average correlation by changing the position of $\mathbf{u}$. We demonstrate that the average covariance between near neighbour heights above / below the hyperplane $\mathbf{w}$ is minimal when $\mathbf{u} = \normc$, thereby reducing the probability of most or all near neighbours being above or below $\mathbf{w}$. Furthermore, in Appendix \ref{sec:off-diag}, we experimentally verify that this value is approximately $0$ when measured over benchmark datasets.

\section{Description of LSH Algorithm Modification}
\label{sec:application_to_forest}
\subsection{Overview}
In Section \ref{sec:lsh_index_structure}, we describe a modification which can be made to binary LSH mechanisms which creates, refines, and utilises an estimate of $\normc$ at query time. We call such an estimate $\normchat$. In Section \ref{sec:fast_centroid_cal}, we discuss the computational cost of computing $\normchat$ and present a method for accelerating it.

\subsection{Structure of LSH-Based ANN Algorithms}
\label{sec:lsh_index_structure}

In the parlance of this paper, LSH-based ANN algorithms can abstractly be described as:
\begin{itemize}
    \item All LSH-based ANN mechanisms store one or more independent compound hash functions for each member of $X$. These are constructed at preprocessing time.
    \item When a query is presented, LSH-based ANN mechanisms apply the same set of compound hash functions to the query. Only those points from $X$ where at least one compound hash collision has occurred are retrieved as a candidate set.
    \item By increasing the number of compound hash functions stored, the expected recall will increase, at the cost of additional preprocessing and query time.
\end{itemize}

\begin{figure}[htbp]
    \centering
    
    \begin{subfigure}{0.8\linewidth}
        \centering
        \includegraphics[width=\linewidth]{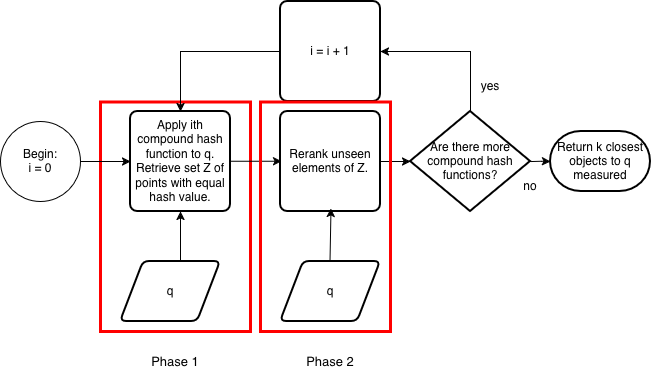}
        \caption{Abstract model of an LSH-based ANN query algorithm.}
        \label{fig:lsh_model}
    \end{subfigure}
    \hfill
    \begin{subfigure}{0.8\linewidth}
        \centering
        \includegraphics[width=\linewidth]{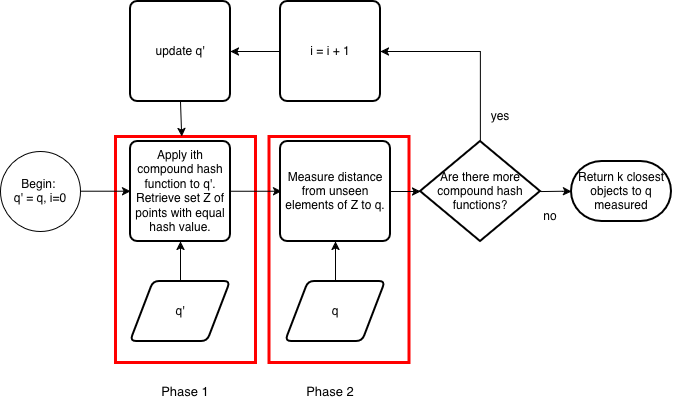}
        \caption{Abstract LSH-based ANN query algorithm which uses query updates.}
        \label{fig:disentangled_lsh}
    \end{subfigure}
    
    \caption{Comparison of query algorithm for: abstract LSH-based ANN model (above) and our modification (below).}
    \label{fig:combined_lsh}
\end{figure}

A flowchart detailing one possible implementation of this process is shown in Figure \ref{fig:lsh_model}. This abstract LSH algorithm begins by hashing the query point to create the first compound hash function. A set $Z$ is generated from those members of $X$ which have not been previously retrieved. Elements of $Z$ are re-ranked, and a candidate set is maintained of the $k$ closest to $\mathbf{q}$. This process repeats for each independent compound hash function before returning the candidate set.

Our modification to this class of algorithm is shown by the flowchart in Figure \ref{fig:disentangled_lsh}. In outline, our approach is to derive an instance of $\normchat$ from the candidate set after each compound hash function is used to retrieve new objects. This derived instance of $\normchat$ is then used as input to Phase 1 of the next compound hash function to be searched, which in turn returns new unseen points from $X$ from which a new $\normchat$ can be estimated.

We expect this derived $\normchat$ to be a member of $\Phi$ with a high probability. Thus, using $\normchat$ as input to the compound hash function instead of $\mathbf{q}$ is expected to result in both a greater probability of hash collision with near neighbours of $\mathbf{q}$, as well as a decreased probability of hash failure. For each iteration, the estimate $\normchat$ will become an increasingly accurate estimate of $\normc$.

\subsection{Cost of Query Update}
\label{sec:fast_centroid_cal}
Naively, the cost of computing the centroid of $k$ points is of $O(k)$ vector operations. As this work must be repeated each time new elements are added to the candidate set, the cost of re-computing $\normchat$ could begin to outweigh any benefit it provides. Thus, we consider how elements of the candidate set may be stored and provide an algorithm which may reduce the cost of computing its centroid. 

Information on the candidate set can be stored in a priority queue $\mathcal{Q}$ where the closest elements to $\mathbf{q}$ are given greatest priority. $\mathcal{Q}$ may contain at most $k$ elements; if a new element is added and the size of the queue would exceed $k$, the lowest priority item is removed. When an instance of $\normchat$ is required, the $\ell_2$-normalised centroid of the elements of $\mathcal{Q}$ may be computed in $k+1$ vector operations. 



We may reduce this computational cost by storing a further vector $\mathbf{s}$, which stores the value $\sum_{\mathbf{x}\in \mathcal{Q} } \mathbf{x}$. When a point $\mathbf{x}$ is added to $\mathcal{Q}$, we set $\mathbf{s} = \mathbf{s} + \mathbf{x}$. If a point $\mathbf{y}$ is removed from $\mathcal{Q}$ as a result of $\mathbf{x}$ being inserted, we set $\mathbf{s} = \mathbf{s} - \mathbf{y}$. To compute $\normchat$, we need only to divide $\mathbf{s}$ by its $\ell_2$-norm, which is an extremely fast operation. Algorithm \ref{alg:query_update} details how $\mathcal{Q}$ and $\mathbf{s}$ may be updated using a set of points $Z$, which were retrieved via hashing as in phase 2 of Figure \ref{fig:disentangled_lsh}.

\begin{algorithm}
\caption{Updating $\mathcal{Q}$ and $\mathbf{s}$}
\begin{algorithmic}[1]
    \State $\mathbf{z} \gets min(Z)$ \textit{Assign to $\mathbf{z}$ the point in $Z$ closest to $\mathbf{q}$}
    \State $\mathbf{t} \gets max(\mathcal{Q})$ \textit{Assign to $\mathbf{t}$ the point furthest from $\mathbf{q}$ in $\mathcal{Q}$}

    \While{$|\mathcal{Q}| < k$}
        \State $Z \gets Z \setminus \{\mathbf{z}\}$ \textit{Remove the closest element from $Z$}
        \State $\mathcal{Q} \gets \mathcal{Q} \cup \{\mathbf{z}\}$ \textit{Add $\mathbf{z}$ to $\mathcal{Q}$}
        \State $\mathbf{s} \gets \mathbf{s} + \mathbf{z}$ \textit{Update $\mathbf{s}$}
        \State $\mathbf{z} \gets min(Z)$
        \State $\mathbf{t} \gets max(\mathcal{Q})$
    \EndWhile

    \While{$||\mathbf{t} - \mathbf{q}||_2 > ||\mathbf{z} - \mathbf{q}||_2$}
        \State $Z \gets Z \setminus \{\mathbf{z}\}$ \textit{Remove the closest element to $\mathbf{q}$ from $Z$}
        \State $\mathcal{Q} \gets \mathcal{Q} \setminus \{\mathbf{t}\}$ \textit{Remove $\mathbf{t}$ from $\mathcal{Q}$}
        \State $\mathbf{s} \gets \mathbf{s} - \mathbf{t} + \mathbf{z}$ \textit{Update $\mathbf{s}$}
        \State $\mathcal{Q} \gets \mathcal{Q} \cup \{\mathbf{z}\}$ \textit{Add the new closest point}
        \State $\mathbf{z} \gets min(Z)$ \textit{Assign to $\mathbf{z}$ the point in $Z$ closest to $\mathbf{q}$}
        \State $\mathbf{t} \gets max(\mathcal{Q})$ \textit{Assign to $\mathbf{t}$ the point furthest from $\mathbf{q}$ in $\mathcal{Q}$}
    \EndWhile
\end{algorithmic}
\label{alg:query_update}
\end{algorithm}

The cost of updating $\mathbf{s}$ in this way is proportional to the number of points added and removed from $\mathcal{Q}$. Therefore, Algorithm \ref{alg:query_update} only saves time if there are fewer than $k$ insertions and removals to $\mathcal{Q}$ at any one time. If more than $k$ operations are required, then computing the centroid in the naive fashion becomes the faster option. In Appendix \ref{sec:update_frequency}, we show experimentally that there are often far fewer than $k$ insertions and removals to $\mathcal{Q}$ after an RP-Tree is searched, thereby validating the practical applicability of this technique.

\section{Experiments}
\label{sec:experiments}

We experimentally verify claims made in the previous sections by performing the following experiments:

\begin{enumerate}
    \item In Section \ref{sec:results}, we demonstrate the effect of modifying RP-Forest using the techniques outlined in Section \ref{sec:application_to_forest}. We show the efficacy of our technique by measuring decreased build and query time compared to the baseline algorithm.
    \item In Section \ref{sec:q_c_dissimilar_experiment}, we experimentally verify that $\mathbf{q}$ and $\normc$ are often dissimilar.
    \item In Section \ref{sec:eliminating_hash_failures}, we provide evidence that hash failures are essentially eliminated when $\normc$ is used as input to hash functions.
    \item In Section \ref{sec:estimating_c}, we measure the degree of approximation the set $\widehat{knn(\mathbf{q})}$ can possess before a useful instance of $\normchat$ can no longer be derived, and demonstrate that only a coarse estimation of the set $\knnq$ is required to compute a useful instance of $\normchat$.
\end{enumerate}

\subsection{Datasets Used}
\label{sec:datasets}

This paper studies four large, high-dimensional datasets of differing modalities, which we describe in detail in Appendix \ref{sec:datasets_appendix}.

\subsection{Experimental Comparison of MQ-Forest and RP-Forest}
\label{sec:results}

\begin{figure*}[t]
    \centering

    \begin{subfigure}{0.495\textwidth}
        \centering
        \includegraphics[width=\linewidth]{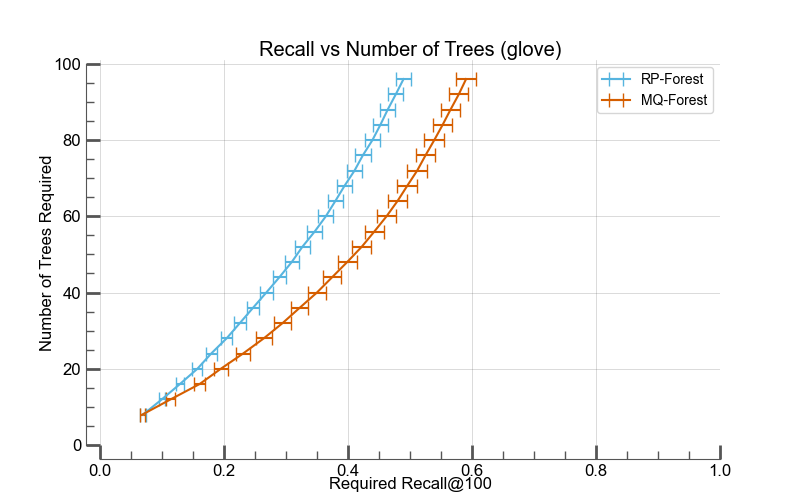}
        \caption{GloVe Recall}
    \end{subfigure}
    \hfill
    \begin{subfigure}{0.495\textwidth}
        \centering
        \includegraphics[width=\linewidth]{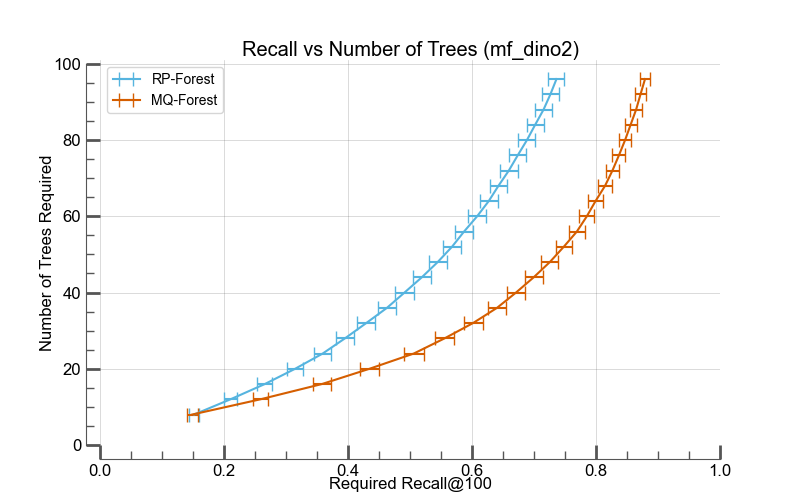}
        \caption{MirFlickr Dino2 Recall}
    \end{subfigure}

    \vspace{0.8em}

    \begin{subfigure}{0.495\textwidth}
        \centering
        \includegraphics[width=\linewidth]{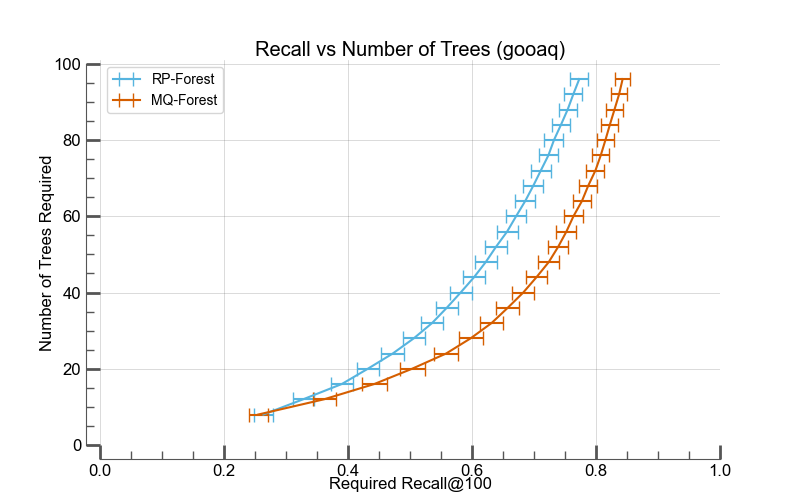}
        \caption{GooAQ Recall}
    \end{subfigure}
    \hfill
    \begin{subfigure}{0.495\textwidth}
        \centering
        \includegraphics[width=\linewidth]{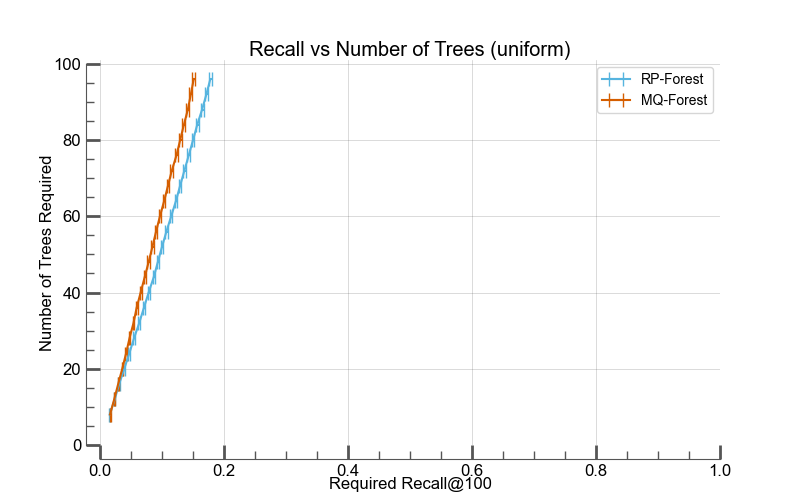}
        \caption{Uniform Recall}
    \end{subfigure}

    \caption{Recall results for an RP-Forest and MQ-Forest consisting of $\mathbf{y}$ trees. Note that the degree of vertical difference between the two curves represents improvement; lower is better. For example, MirFlickr can be searched at a recall of approximately $75\%$ in 96 trees for RP-Forest, and approximately $43$ for MQ-Forest whilst reducing the number of distance calculations.}
    \label{fig:mq_forest_recall_results}
\end{figure*}

\begin{figure*}[t]
    \centering

    \begin{subfigure}{0.495\textwidth}
        \centering
        \includegraphics[width=\linewidth]{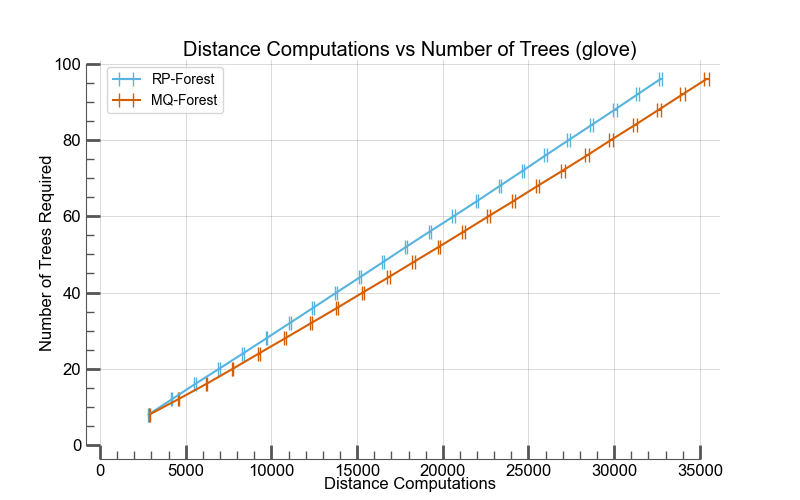}
        \caption{GloVe Distance Computations}
    \end{subfigure}
    \hfill
    \begin{subfigure}{0.495\textwidth}
        \centering
        \includegraphics[width=\linewidth]{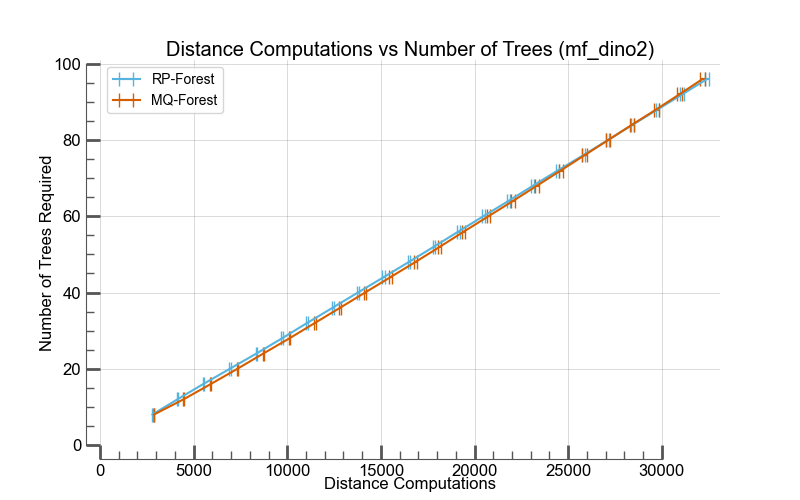}
        \caption{MirFlickr Dino2 Distance Computations}
    \end{subfigure}

    \vspace{0.8em}

    \begin{subfigure}{0.495\textwidth}
        \centering
        \includegraphics[width=\linewidth]{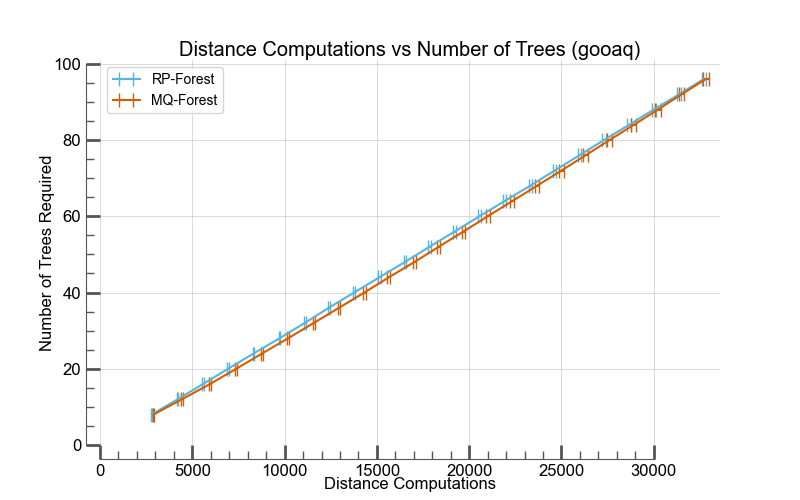}
        \caption{GooAQ Distance Computations}
    \end{subfigure}
    \hfill
    \begin{subfigure}{0.495\textwidth}
        \centering
        \includegraphics[width=\linewidth]{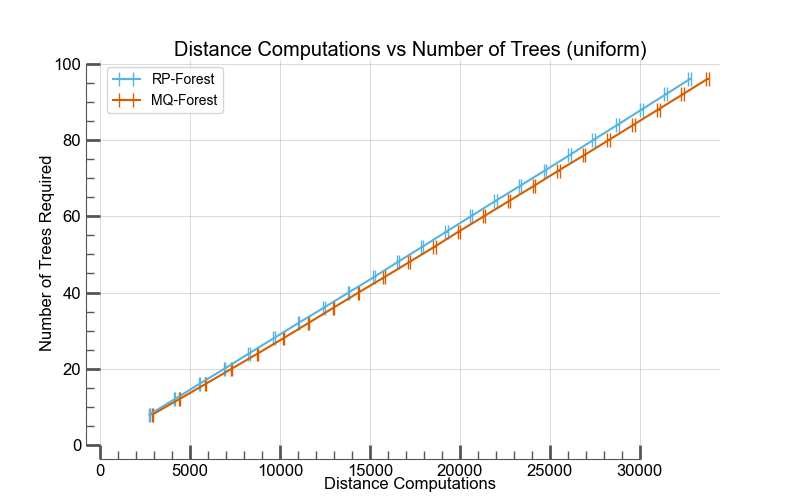}
        \caption{Uniform Distance Computations}
    \end{subfigure}

    \caption{Distance computations for an RP-Forest and MQ-Forest consisting of $\mathbf{y}$ trees.}
    \label{fig:mq_forest_comps_results}
\end{figure*}

We measure the performance of MQ-Forest, and compare it to the baseline RP-Forest algorithm. The MQ-Forest algorithm operates as follows: the first $v$ trees of the RP and MQ-Forests execute identically (where $v$ is a user-defined parameter). Upon reaching the $v+1$th tree, the MQ-Forest uses the procedure outlined in Figure \ref{fig:disentangled_lsh}, where the initial instance of $\normchat$ is generated from the candidate set formed by searching the first $v$ trees. A new estimate of $\normchat$ is computed after each tree is searched, using the fast centroid computation algorithm as described in Algorithm \ref{alg:query_update}. To measure the performance of each algorithm, we formulate the following experiment:

\begin{enumerate}
    \item Draw and exclude $250$ objects from each dataset for use as queries.
    \item Build an RP-Forest and an MQ-Forest over the dataset, with a number of trees ranging from $8$ to $96$. We fix $n_s = 500$ and $v=8$. To make a direct comparison, we use exactly the same trees and associated hash functions for both the RP and MQ-Forest. 
    \item We use each object reserved as a query as input to both the RP and MQ-Forest. For each query, we record the recall for $k=100$, and the number of distance computations required to serve the query. Note that we consider each vector addition as required by Algorithm \ref{alg:query_update} as a distance computation to ensure that the cost of centroid computation is taken into account.
\end{enumerate}

Figure \ref{fig:mq_forest_recall_results} shows the number of trees which must be built and searched in order to achieve a fixed recall value. Figure \ref{fig:mq_forest_comps_results} shows the number of distance computations required to search some number of trees.

\subsubsection{Discussion}
We find that, for all neural-network-derived datasets, the MQ-Forest achieves a higher average recall whilst requiring approximately the same number of distance computations to serve a query as RP-Forest with a fixed number of trees.

From inspection of Figure \ref{fig:mq_forest_recall_results}, it can be seen that RP-Forest must build and search up to $78\%$ more trees to achieve a fixed recall result than MQ-Forest, depending on the dataset. Thus, MQ-Forest can expect expedited build and query times comparatively.

For example, for the GloVe dataset, RP-Forest achieves a recall of $0.49$ after $96$ trees have been searched. MQ-Forest requires only $68$ trees be built and searched to achieve the same recall, a reduction of $29\%$. The number of metric calls required for this search are also reduced; 32696 are required for the RP-Forest, compared with 25543 for the MQ-Forest, a $22\%$ decrease. Our choice to use the GloVe dataset as an example here is deliberate, as it benefits the least of all neural-network-derived benchmark datasets from the query modification framework. For all other datasets, the MQ-Forest performs better relatively.

However, for the uniform dataset, the MQ-Forest fails to improve performance. An explanation for poor performance of MQ-Forest in this scenario can be seen in Figure \ref{fig:m_near_neighbour_goodness}. It shows that it is difficult in the context of the uniform dataset to form a good instance of $\normchat$ without a very good instance of $\widehat{knn(\mathbf{q})}$. As RP-Forest does not return a significant number of near neighbours for the uniform dataset, an insufficiently good estimate of $\widehat{knn(\mathbf{q})}$ could not be formed, thus leading MQ-Forest to fail.

\subsubsection{Fast Production Implementation}

We provide a production-ready Cython library for Python, which is available for download from \url{https://github.com/benclaydon/rpforest_mtq/}. This version of MQ-Forest is derived from the RP-Forest implementation available from \url{https://github.com/lyst/rpforest}. In Figure \ref{fig:production_speeds}, we compare the query throughput of the existing RP-Forest system with our modification for each embedding dataset. We show that our system achieves a higher query throughput for each embedding dataset. We once again emphasise that, as shown in Section \ref{sec:results}, the MQ-Forest requires significantly less build time than RP-Forest.

\begin{figure}
    \centering
    \begin{subfigure}{0.48\textwidth}
        \centering
        \includegraphics[width=\linewidth]{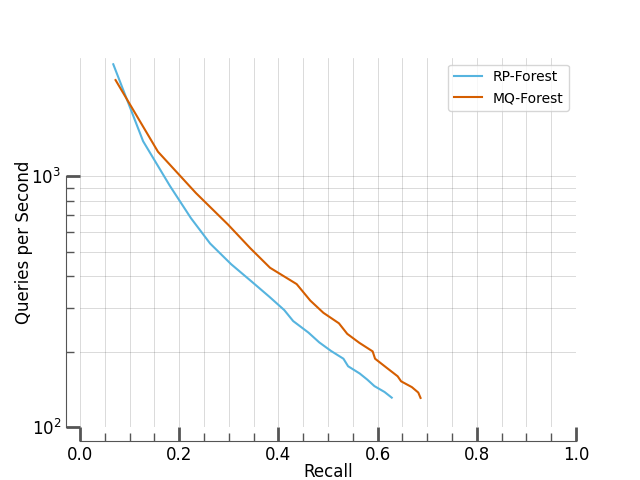}
        \caption{GloVe}
        \label{fig:production_glove}
    \end{subfigure}
    \hfill
    \begin{subfigure}{0.48\textwidth}
        \centering
        \includegraphics[width=\linewidth]{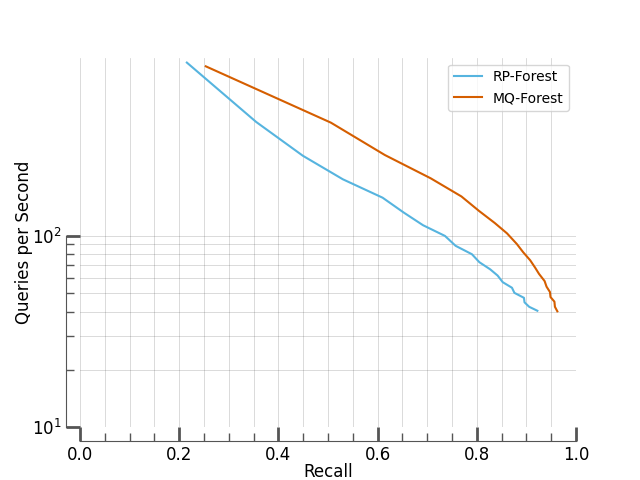}
        \caption{DINOv2}
        \label{fig:production_dino2}
    \end{subfigure}

    \vspace{1em}
    \begin{subfigure}{0.48\textwidth}
        \centering
        \includegraphics[width=\linewidth]{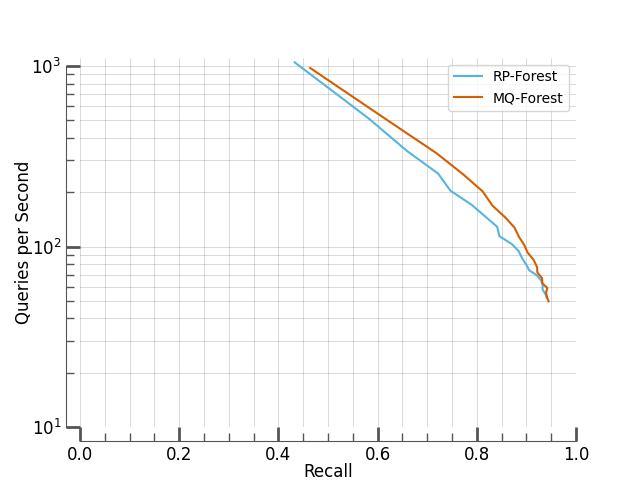}
        \caption{GooAQ}
        \label{fig:production_gooaq}
    \end{subfigure}
    \hfill
    \begin{subfigure}{0.48\textwidth}
        \centering
            \includegraphics[width=\linewidth]{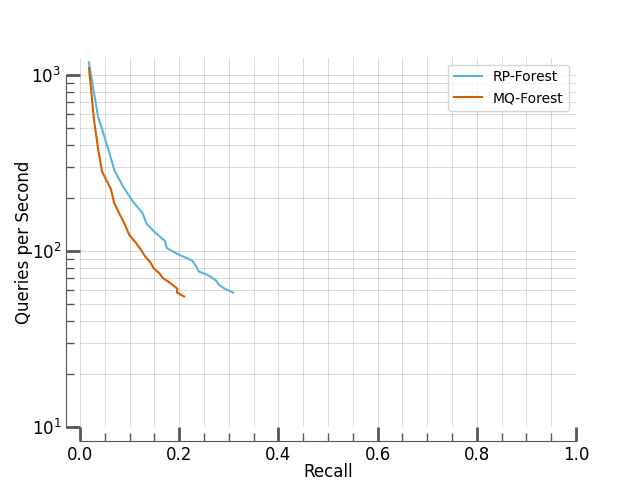}
        \caption{Uniform 200D}
        \label{fig:production_uniform}
    \end{subfigure}

    \caption{RP-Forest and MQ-Forest: Recall vs Queries Per Second. Note the log-scale $y$ axis. Searching for $k=20$ near neighbours.}
    \label{fig:production_speeds}
\end{figure}

\subsection{Observations on Distance Between $\mathbf{q}$ and $\normc$}
\label{sec:q_c_dissimilar_experiment}

We show that $\mathbf{q}$ and $\normc$ are often distant by computing their distance over a sample of 100 randomly drawn queries from each embedding dataset. The scale of these distances is large enough to cause $\mathbf{q}$ and $\normc$ to frequently take different hash values. As an example, consider the smallest average distance of any dataset, which is $0.33$ for the uniform dataset. Using Charikar's distance to hash collision probability formula \cite{Charikar_2002}, we may compute that $\Pr(h(\mathbf{q}) = h(\mathbf{\normc})) = 0.89$ at a distance of $0.33$. For some $m$ bit compound hash function, the probability of collision is $0.89^m$, which quickly approaches $0$ as $m$ grows. Therefore, the set of points retrieved by hashing $\mathbf{q}$ and $\normc$ becomes increasingly likely to differ. However, we have shown that those points retrieved by $\normc$ are expected to be superior.

To aid comparison, we also show in Table \ref{tab:distances} the mean distances from $\mathbf{q}$ to its 1nn, 10nn, 100nn. We also record the mean distance to all other points in the dataset. Mean values of these distances are shown in Table \ref{tab:distances}.

\begin{table}[t]
\centering
\begin{tabular}{l|c|c|c|c}
\hline
 & Glove & MirFlickr & GOOAQ & Uniform 200D \\
\hline
$||\mathbf{q} - \left< \mathbf{c} \right>||_2$ 
& 0.59 & 0.58 & 0.44 & 0.33 \\

$||\mathbf{q} - \mathbf{x}_1||_2$ 
& 0.77 & 0.71 & 0.31 & 1.15 \\

$||\mathbf{q} - \mathbf{x}_{10}||_2$ 
& 0.84 & 0.78 & 0.49 & 1.18 \\

$||\mathbf{q} - \mathbf{x}_{100}||_2$ 
& 0.91 & 0.86 & 0.69 & 1.20 \\

\hline

Average Interpoint Distance 
& 1.31 & 1.39 & 1.38 & 1.41 \\
\hline

\end{tabular}

\caption{
Mean Euclidean distance from $\mathbf{q}$ to $\normc$ and its $k$-th nearest neighbour for each embedding dataset and $k=1, 10, 100$. These distances are average observations measured using $100$ randomly sampled vectors from each dataset.
}

\label{tab:distances}
\end{table}

\subsection{Eliminating Hash Failure Cases}
\label{sec:eliminating_hash_failures}
Here we demonstrate experimentally that hash failure cases are essentially eliminated when $\normc$ is used as input to a binary hash function. To show this we formulate an experiment by initially randomly drawing and excluding a set of $100$ queries $Q \subset X$ from each dataset, and a subset $H \subset \mathcal{H}_{rp}$ of $1000$ random binary hash functions are drawn.

For all combinations $(h \in H) \times (\mathbf{q} \in Q)$, we measure what fraction of $100nn(\mathbf{q})$ collide with $\mathbf{q}$ and $\normc$. This results in $100,000$ individual recall measurements for each dataset. A hash failure is recorded to have occurred at recall values close to $0$. We report the distributions of these recall values in Figure \ref{fig:hash_collisions} for each embedding dataset.

\begin{figure}
    \centering
    
    \begin{subfigure}{0.38\textwidth}
        \centering
        \includegraphics[width=\linewidth]{fig/hash_probs/mf_dino2_q.png}
        \caption{Dino2 ($\mathbf{q}$)}
        \label{fig:dino2_q_example}
    \end{subfigure}
    \hfill
    \begin{subfigure}{0.38\textwidth}
        \centering
        \includegraphics[width=\linewidth]{fig/hash_probs/mf_dino2_c.png}
        \caption{Dino2 ($\normc$)}
        \label{fig:dino2_c_example}
    \end{subfigure}

    \vspace{0.5em}

    \begin{subfigure}{0.38\textwidth}
        \centering
        \includegraphics[width=\linewidth]{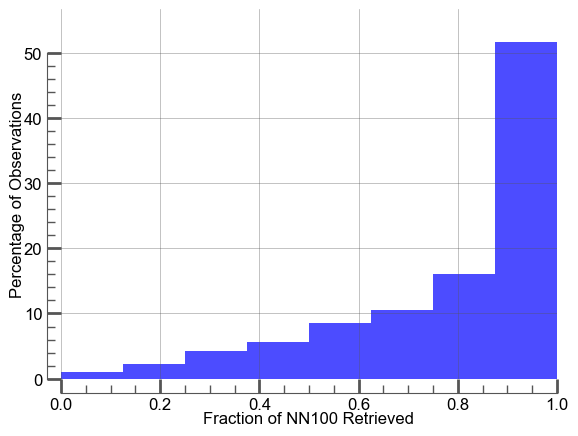}
        \caption{GooAQ ($\mathbf{q}$)}
        \label{fig:gooaq_q_example}
    \end{subfigure}
    \hfill
    \begin{subfigure}{0.38\textwidth}
        \centering
        \includegraphics[width=\linewidth]{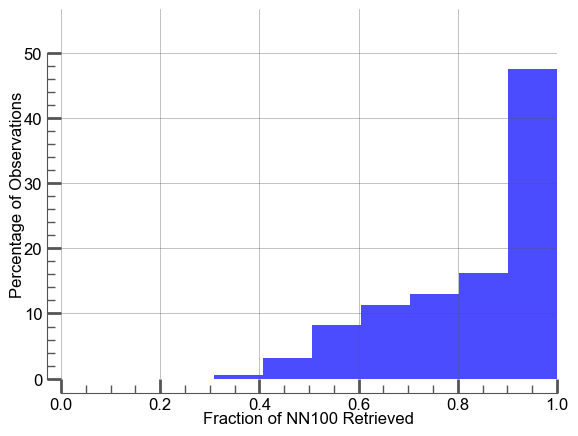}
        \caption{GooAQ ($\normc$)}
        \label{fig:gooaq_c_example}
    \end{subfigure}

    \vspace{0.5em}

    \begin{subfigure}{0.38\textwidth}
        \centering
        \includegraphics[width=\linewidth]{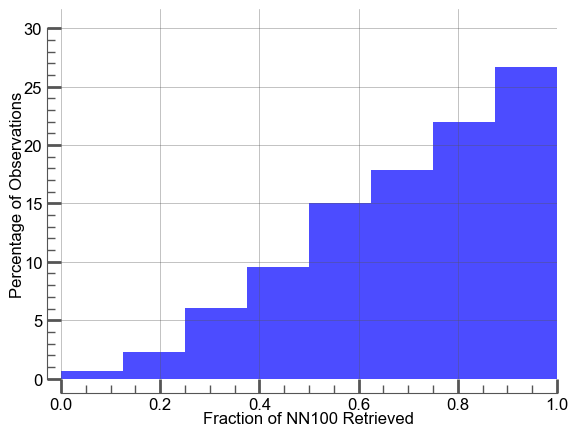}
        \caption{GloVe ($\mathbf{q}$)}
        \label{fig:glove_q_example}
    \end{subfigure}
    \hfill
    \begin{subfigure}{0.38\textwidth}
        \centering
        \includegraphics[width=\linewidth]{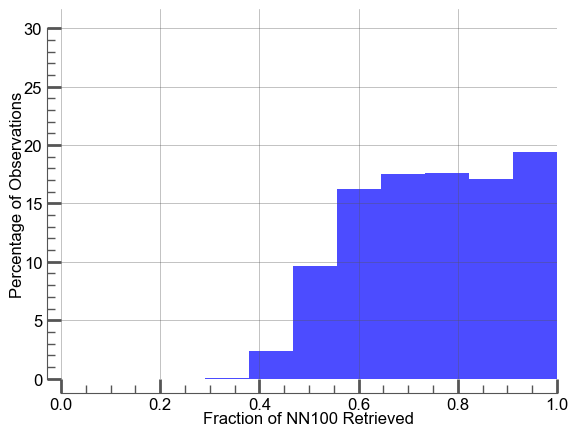}
        \caption{GloVe ($\normc$)}
        \label{fig:glove_c_example}
    \end{subfigure}

    \begin{subfigure}{0.38\textwidth}
        \centering
        \includegraphics[width=\linewidth]{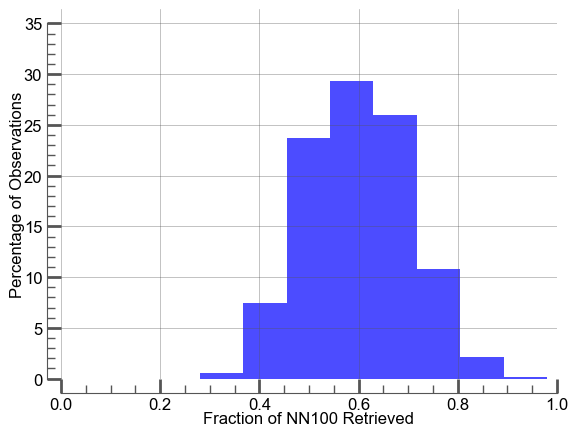}
        \caption{Uniform ($\mathbf{q}$)}
        \label{fig:uniform_q_example}
    \end{subfigure}
    \hfill
    \begin{subfigure}{0.38\textwidth}
        \centering
        \includegraphics[width=\linewidth]{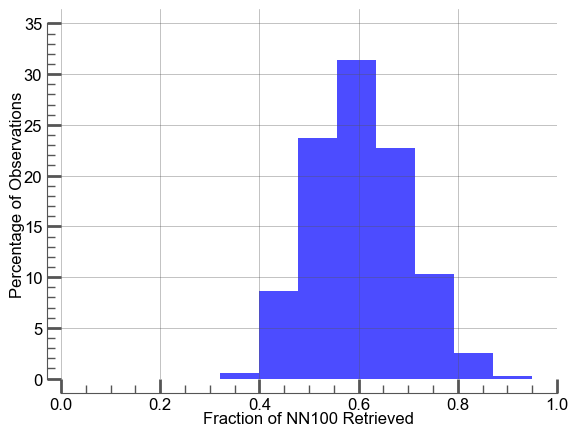}
        \caption{Uniform ($\normc$)}
        \label{fig:uniform_c_example}
    \end{subfigure}

    \caption{Fraction of points in $\knnq$ which collide with $\mathbf{q}$ (left column) and $\normc$ (right column). Measured using $1000$ randomly drawn binary hash functions and $100$ queries.}
    \label{fig:hash_collisions}
\end{figure}


We observe that there are no instances of hash failures when $\normc$ is used as input to randomly drawn binary hash functions for any embedding dataset. This experimentally reinforces the result set out in Section \ref{sec:off_diag_properties}, where the correlation of hash function outputs was explained to be the cause of hash failures. In Appendix \ref{sec:off-diag}, we further demonstrate that the average value of $\operatorname{Cov}
\left(
    \mathbf{x}\cdot\mathbf{w},
    \mathbf{y}\cdot\mathbf{w} | \mathbf{w} \cdot \mathbf{u} = b
\right)$ for some $\mathbf{x}, \mathbf{y} \in \knnq$, some hyperplane normal $\mathbf{w}$, and some constant $b$ is approximately $0$ when $\mathbf{u} = \normchat$. 

\subsection{Varying Degrees of Approximations of $\widehat{knn(\mathbf{q})}$}
\label{sec:estimating_c}

Here we study how well instances of $\normchat$ perform when derived from sets $\widehat{knn(\mathbf{q})}$ of various degrees of approximation. If a good instance of $\normchat$ can be drawn from a weak approximation of $\widehat{knn(\mathbf{q})}$, then it follows that we may derive a good $\normchat$ from the result of a coarse search using only a small number of compound hash functions. Initially, we define a quality metric $\kappa$ to measure the quality of a given $\normchat$:

\[
\kappa \eqdef \frac{||\left< \mathbf{c} \right> - \normchat||_2}{||\left< \mathbf{c} \right> - \mathbf{q}||_2}
\]

Under this metric, if $\normchat = \left< \mathbf{c} \right>$, $\kappa = 0$; this represents the best possible $\normchat$. If $\normchat = \mathbf{q}$, then $\kappa = 1$, which represents that $\normchat$ is expected to have equal performance to the original query. If $0 < \kappa < 1$, it is probable that the instance of $\normchat$ is a member of $\Phi$ and thus will perform favourably to $\mathbf{q}$, but not so well as $\left< \mathbf{c} \right>$. Any $\normchat$ with $\kappa > 1$ is worse than the query, and is therefore of no use. Any $\normchat$ with a $\kappa$ value of less than $1$ is likely to be a member of the set $\Phi$ as it is closer to $\normc$ than $\mathbf{q}$ is to $\normc$.

We record $\kappa$ values of $\normchat$ instances which were derived from sets containing $k=100$ elements randomly sampled from the $m$-near neighbour set. We call such a set $Y_m$. By increasing $m$, the set $Y_m$ will contain elements less similar to $\mathbf{q}$ on average.

We form an experiment by randomly drawing and excluding $250$ elements from each dataset.  Sets $Y_m$ with values of $m$ between $100$ and $5000$ were then created for each query (apart from queries drawn from the uniform 200D dataset, as the value of $\kappa$ grows quickly with $m$ for this dataset; therefore we cap $m=500$ for this dataset). Finally, $\kappa$ values were measured from each $Y_m$. Figure \ref{fig:m_near_neighbour_goodness} shows the average $\kappa$ values compared with the value of $m$.

\begin{figure}
    \centering
    \begin{subfigure}[b]{0.45\linewidth}
        \centering
        \includegraphics[width=\linewidth]{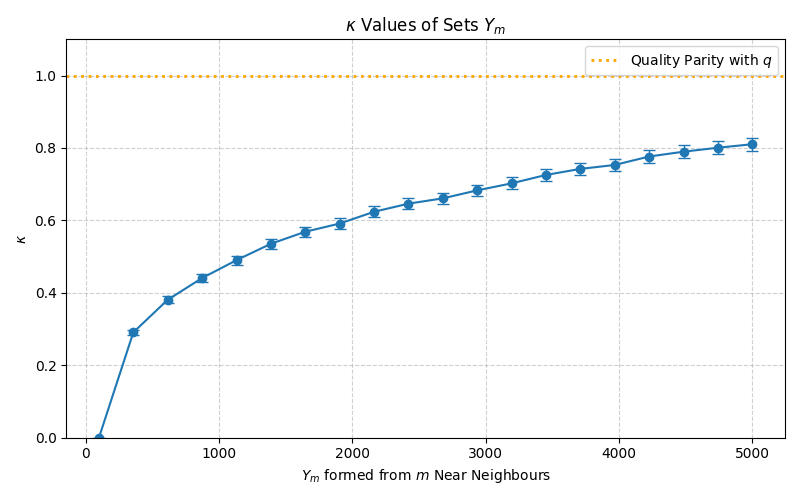}
        \caption{MirFlickr DINOv2}
        \label{fig:dino2_faroutness}
    \end{subfigure}
    
    \begin{subfigure}[b]{0.45\linewidth}
        \centering
        \includegraphics[width=\linewidth]{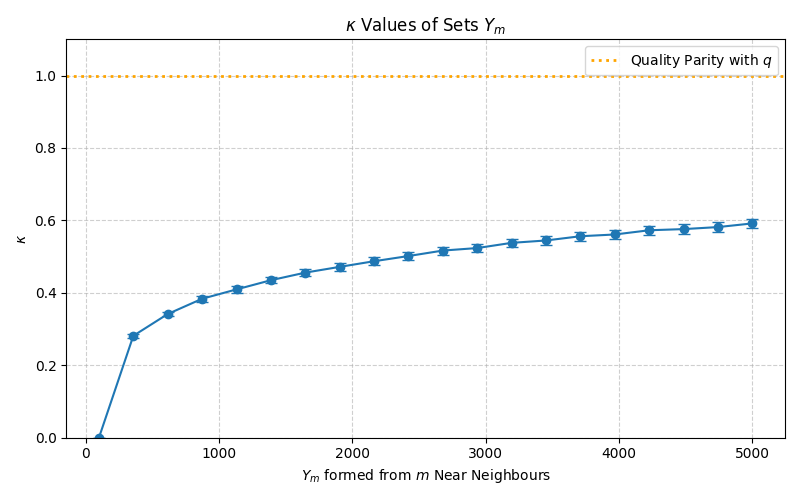}
        \caption{Twitter GloVe}
        \label{fig:glove_faroutness}
    \end{subfigure}

        \begin{subfigure}[b]{0.45\linewidth}
        \centering
        \includegraphics[width=\linewidth]{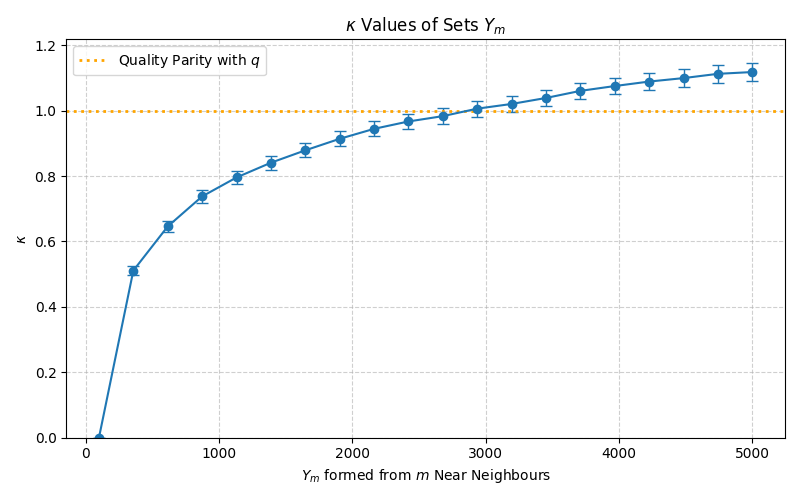}
        \caption{GOOAQ}
        \label{fig:gooaq_faroutness}
    \end{subfigure}

    \begin{subfigure}[b]{0.45\linewidth}
        \centering
        \includegraphics[width=\linewidth]{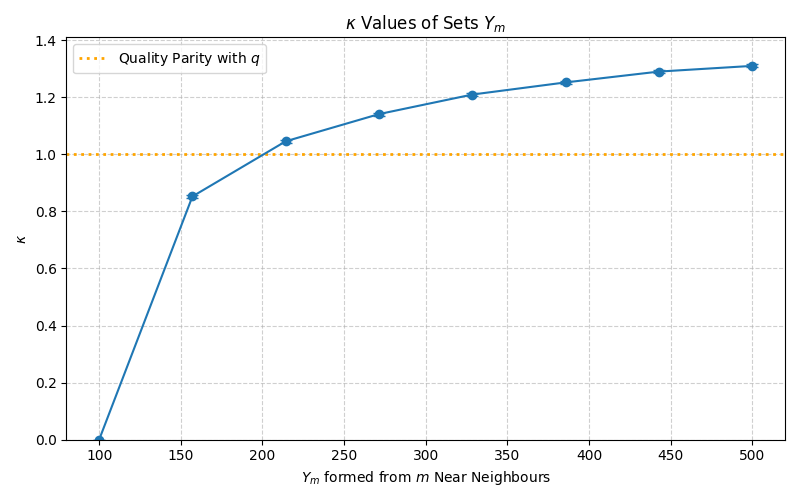}
        \caption{Uniform 200D}
        \label{fig:uniform_faroutness}
    \end{subfigure}
    
    \caption{Average value of $\kappa$ from estimating $\normc$ via a random selection of $100$ elements from the $m$ nearest neighbours.}
    \label{fig:m_near_neighbour_goodness}
\end{figure}

We show that, even when $m$ is significantly larger than $k$, a useful instance of $\normchat$ may be derived from the associated set $Y_m$. An exception to this is the uniform 200D dataset, where the value of $m$ can be at most twice $k$ if a good estimate of $\normchat$ is to be derived. This observation reinforces the findings of Appendix \ref{sec:clt_proof}, which theoretically show why taking the normalised centroid of some $\widehat{\knnq}$ is likely to produce some $\normchat \in \Phi$.

Additionally, we show that the average value of $\kappa$ decreases monotonically as the quality of $Y_m$ improves. In the context of an LSH-based mechanism, the candidate set similarly monotonically improves as new candidates are retrieved and re-ranked. Thus, we can expect the estimate $\normchat$ in the MQ-Forest to similarly monotonically improve as it is iteratively refined. 


\section{Conclusion}
In this paper, we have shown that the build and query procedure of LSH-based ANN mechanisms may be improved via iterative query modification whilst leaving the build procedure entirely unchanged. We show that iterative query modification enjoys a strong statistical and geometric underpinning, as well as positive practical outcomes when applied to a suite of large, high-dimensional benchmark datasets.

\section{Resources}

All code used for generating results for this paper can be found at \url{https://github.com/benclaydon/rp-forest-moving-query/tree/unrestricted-euc}.

\bibliographystyle{plain}
\bibliography{references}

@inproceedings{rptree, address={Victoria British Columbia Canada}, title={Random projection trees and low dimensional manifolds}, ISBN={978-1-60558-047-0}, url={https://dl.acm.org/doi/10.1145/1374376.1374452}, DOI={10.1145/1374376.1374452}, abstractNote={We present a simple variant of the k-d tree which automatically adapts to intrinsic low dimensional structure in data without having to explicitly learn this structure.}, booktitle={Proceedings of the fortieth annual ACM symposium on Theory of computing}, publisher={ACM}, author={Dasgupta, Sanjoy and Freund, Yoav}, year={2008}, month=may, pages={537–546}, language={en} }

@inproceedings{Yan_Wang_Wang_Wang_Li_2018, title={K-nearest Neighbor Search by Random Projection Forests}, url={http://arxiv.org/abs/1812.11689}, DOI={10.1109/BigData.2018.8622307}, abstractNote={K-nearest neighbor (kNN) search has wide applications in many areas, including data mining, machine learning, statistics and many applied domains. Inspired by the success of ensemble methods and the flexibility of tree-based methodology, we propose random projection forests (rpForests), for kNN search. rpForests finds kNNs by aggregating results from an ensemble of random projection trees with each constructed recursively through a series of carefully chosen random projections. rpForests achieves a remarkable accuracy in terms of fast decay in the missing rate of kNNs and that of discrepancy in the kNN distances. rpForests has a very low computational complexity. The ensemble nature of rpForests makes it easily run in parallel on multicore or clustered computers; the running time is expected to be nearly inversely proportional to the number of cores or machines. We give theoretical insights by showing the exponential decay of the probability that neighboring points would be separated by ensemble random projection trees when the ensemble size increases. Our theory can be used to refine the choice of random projections in the growth of trees, and experiments show that the effect is remarkable.}, note={arXiv:1812.11689 [cs, stat]}, booktitle={2018 IEEE International Conference on Big Data (Big Data)}, author={Yan, Donghui and Wang, Yingjie and Wang, Jin and Wang, Honggang and Li, Zhenpeng}, year={2018}, month=dec, pages={4775–4781} }

@inproceedings{Charikar_2002, series={STOC ’02}, title={Similarity estimation techniques from rounding algorithms}, ISBN={978-1-58113-495-7}, abstractNote={(MATH) A locality sensitive hashing scheme is a distribution on a family $F$ of hash functions operating on a collection of objects, such that for two objects x,y, PrhεF[h(x) = h(y)] = sim(x,y), where sim(x,y) ε [0,1] is some similarity function defined on the collection of objects. Such a scheme leads to a compact representation of objects so that similarity of objects can be estimated from their compact sketches, and also leads to efficient algorithms for approximate nearest neighbor search and clustering. Min-wise independent permutations provide an elegant construction of such a locality sensitive hashing scheme for a collection of subsets with the set similarity measure sim(A,B) = frac{|A ∩ B|}{|A ∪ B|}.(MATH) We show that rounding algorithms for LPs and SDPs used in the context of approximation algorithms can be viewed as locality sensitive hashing schemes for several interesting collections of objects. Based on this insight, we construct new locality sensitive hashing schemes for:A collection of vectors with the distance between → over u and → over v measured by Ø(→ over u, → over v)/π, where Ø(→ over u, → over v) is the angle between → over u) and → over v). This yields a sketching scheme for estimating the cosine similarity measure between two vectors, as well as a simple alternative to minwise independent permutations for estimating set similarity.A collection of distributions on n points in a metric space, with distance between distributions measured by the Earth Mover Distance (EMD), (a popular distance measure in graphics and vision). Our hash functions map distributions to points in the metric space such that, for distributions P and Q, EMD(P,Q) ≤ EhεF [d(h(P),h(Q))] ≤ O(log n log log n). EMD(P, Q).}, booktitle={Proceedings of the thiry-fourth annual ACM symposium on Theory of computing}, publisher={Association for Computing Machinery}, author={Charikar, Moses S.}, pages={380–388}, collection={STOC ’02} }

@inproceedings{Gondara_2015, title={Random Forest with Random Projection to Impute Missing Gene Expression Data}, url={https://ieeexplore.ieee.org/abstract/document/7424493}, DOI={10.1109/ICMLA.2015.29}, abstractNote={Measurement error or lack of proper experimental setup often results in invalid or missing data in gene expression studies. Small sample size and cost of re-running the experiment presents a need for an efficient missing data imputation technique. In this paper, we propose a method based on Random forest using Random projection as a data pre-processing filter. Initial results using varying missing data proportions on variety of real datasets show that the imputation process based on Random forest performs equally well or better than K-Nearest Neighbor & Support Vector Regression based methods. Using Random projection we show that dimensionality of a dataset can be reduced by 50 percent without affecting the imputation process.}, booktitle={2015 IEEE 14th International Conference on Machine Learning and Applications (ICMLA)}, author={Gondara, Lovedeep}, year={2015}, month=dec, pages={1251–1256} }

@inproceedings{Lee_Yang_Oh_2015, address={Santiago, Chile}, title={Fast and Accurate Head Pose Estimation via Random Projection Forests}, ISBN={978-1-4673-8391-2}, url={http://ieeexplore.ieee.org/document/7410584/}, DOI={10.1109/ICCV.2015.227}, abstractNote={In this paper, we consider the problem of estimating the gaze direction of a person from a low-resolution image. Under this condition, reliably extracting facial features is very difﬁcult. We propose a novel head pose estimation algorithm based on compressive sensing. Head image patches are mapped to a large feature space using the proposed extensive, yet efﬁcient ﬁlter bank. The ﬁlter bank is designed to generate sparse responses of color and gradient information, which can be compressed using random projection, and classiﬁed by a random forest. Extensive experiments on challenging datasets show that the proposed algorithm performs favorably against the state-of-the-art methods on head pose estimation in low-resolution images degraded by noise, occlusion, and blurring.}, booktitle={2015 IEEE International Conference on Computer Vision (ICCV)}, publisher={IEEE}, author={Lee, Donghoon and Yang, Ming-Hsuan and Oh, Songhwai}, year={2015}, month=dec, pages={1958–1966}, language={en} }

@article{Lee_Yang_Oh_2019, title={Head and Body Orientation Estimation Using Convolutional Random Projection Forests}, volume={41}, ISSN={1939-3539}, DOI={10.1109/TPAMI.2017.2784424}, abstractNote={In this paper, we consider the problem of estimating the head pose and body orientation of a person from a low-resolution image. Under this setting, it is difficult to reliably extract facial features or detect body parts. We propose a convolutional random projection forest (CRPforest) algorithm for these tasks. A convolutional random projection network (CRPnet) is used at each node of the forest. It maps an input image to a high-dimensional feature space using a rich filter bank. The filter bank is designed to generate sparse responses so that they can be efficiently computed by compressive sensing. A sparse random projection matrix can capture most essential information contained in the filter bank without using all the filters in it. Therefore, the CRPnet is fast, e.g., it requires 0.04;mathrmms to process an image of 50times 50 pixels, due to the small number of convolutions (e.g., 0.01 percent of a layer of a neural network) at the expense of less than 2 percent accuracy. The overall forest estimates head and body pose well on benchmark datasets, e.g., over 98 percent on the HIIT dataset, while requiring 3.8;mathrmms without using a GPU. Extensive experiments on challenging datasets show that the proposed algorithm performs favorably against the state-of-the-art methods in low-resolution images with noise, occlusion, and motion blur.}, number={1}, journal={IEEE Transactions on Pattern Analysis and Machine Intelligence}, author={Lee, Donghoon and Yang, Ming-Hsuan and Oh, Songhwai}, year={2019}, month=jan, pages={107–120} }

@article{Tan_Yang_Rahardja_2022, title={Sparse random projection isolation forest for outlier detection}, volume={163}, ISSN={0167-8655}, DOI={10.1016/j.patrec.2022.09.015}, abstractNote={Isolation Forest has a low computational complexity, hence has been widely applied to detect outliers in large-scale data. However, it suffers from the artifacts caused by the hyperplanes chosen, thereby failing to detect outliers in some specific regions. To tackle this problem, we propose the random-projection-based Isolation Forest, which works in two steps. First, we transform the data using the random projection technique. Then, we employ the Isolation Forest to identify outliers using the transformed data. Experimental results show that the proposed methods outperform 12 state-of-the-art outlier detectors.}, journal={Pattern Recognition Letters}, author={Tan, Xu and Yang, Jiawei and Rahardja, Susanto}, year={2022}, month=nov, pages={65–73} }

@article{Zhang_2011, title={Phenotype Recognition for RNAi Screening by Random Projection Forest}, volume={1371}, ISSN={0094-243X}, DOI={10.1063/1.3596627}, abstractNote={High‐content screening is important in drug discovery. The use of images of living cells as the basic unit for molecule discovery can aid the identification of small compounds altering cellular phenotypes. As such, efficient computational methods are required for the rate limiting task of cellular phenotype identification. In this paper we first investigate the effectiveness of a feature description approach by combining Haralick texture analysis with Curvelet transform and then propose a new ensemble approach for classification. The ensemble contains a set of base classifiers which are trained using random projection (RP) of original features onto higher‐dimensional spaces. With Classification and Regression Tree (CART) as the base classifier, it has been empirically demonstrated that the proposed Random Projection Forest ensemble gives better classification results than those achieved by the Boosting, Bagging and Rotation Forest algorithms, offering a classification rate ∼88 percent with smallest standard deviation, which compares sharply with the published result of 82 percent.}, number={1}, journal={AIP Conference Proceedings}, author={Zhang, Bailing}, year={2011}, month=june, pages={55–64} }

@inproceedings{Faggioli_Ferro_Perego_Tonellotto_2024, address={New York, NY, USA}, series={SIGIR ’24}, title={Dimension Importance Estimation for Dense Information Retrieval}, ISBN={979-8-4007-0431-4}, url={https://dl.acm.org/doi/10.1145/3626772.3657691}, DOI={10.1145/3626772.3657691}, abstractNote={Recent advances in Information Retrieval have shown the effectiveness of embedding queries and documents in a latent high-dimensional space to compute their similarity. While operating on such high-dimensional spaces is effective, in this paper, we hypothesize that we can improve the retrieval performance by adequately moving to a query-dependent subspace. More in detail, we formulate the Manifold Clustering (MC) Hypothesis: projecting queries and documents onto a subspace of the original representation space can improve retrieval effectiveness. To empirically validate our hypothesis, we define a novel class of Dimension IMportance Estimators (DIME). Such models aim to determine how much each dimension of a high-dimensional representation contributes to the quality of the final ranking and provide an empirical method to select a subset of dimensions where to project the query and the documents. To support our hypothesis, we propose an oracle DIME, capable of effectively selecting dimensions and almost doubling the retrieval performance. To show the practical applicability of our approach, we then propose a set of DIMEs that do not require any oracular piece of information to estimate the importance of dimensions. These estimators allow us to carry out a dimensionality selection that enables performance improvements of up to +11.5 percent (moving from 0.675 to 0.752 nDCG@10) compared to the baseline methods using all dimensions. Finally, we show that, with simple and realistic active feedback, such as the user’s interaction with a single relevant document, we can design a highly effective DIME, allowing us to outperform the baseline by up to +0.224 nDCG@10 points (+58.6 percent, moving from 0.384 to 0.608).}, booktitle={Proceedings of the 47th International ACM SIGIR Conference on Research and Development in Information Retrieval}, publisher={Association for Computing Machinery}, author={Faggioli, Guglielmo and Ferro, Nicola and Perego, Raffaele and Tonellotto, Nicola}, year={2024}, month=july, pages={1318–1328}, collection={SIGIR ’24} }

@article{Cui_Li_Zhu_Li_Zhang_2024, title={Online Query Expansion Hashing for Efficient Image Retrieval}, volume={34}, ISSN={1558-2205}, DOI={10.1109/TCSVT.2023.3296412}, abstractNote={Unsupervised hashing has the desirable advantages of label independence, high storage, and retrieval efficiency, which is suitable for scalable image retrieval. Most existing methods focus on enhancing the image hashing model training process at the offline stage. However, little attention has been paid to the query content analysis by them at the online retrieval stage. They still suffer from important query semantic shortages, and thus limit the online retrieval performance, which is the ultimate objective of the image retrieval system. In this paper, we propose an Online Query Expansion Hashing (OQEH) for efficient image retrieval, by adaptively enhancing the discriminative capability of query hash codes in an expansion manner at the online retrieval stage. Specifically, we first design a self-expansion network to learn semantically invariant feature representations from images and their visual augmentations. Then, we conduct neighborhood-expansion to search similar samples for each image from a query expansion set with the semantically invariant features and design a Transformer architecture to adaptively transfer the semantics of neighbor samples to their corresponding images. With the support of semantically invariant features, query expansion set, and adaptive semantic transfer, the representation capability of query hash codes can be enhanced at the online retrieval stage. Experimental results demonstrate that the proposed OQEH method achieves superior retrieval accuracy and comparable retrieval efficiency compared with the state-of-the-art methods. Particularly, on MS COCO dataset, OQEH can obtain about 6 percent performance improvement compared with the state-of-the-art results. The source codes of our method are available at: https://github.com/christinecui/OQEH.}, number={3}, journal={IEEE Transactions on Circuits and Systems for Video Technology}, author={Cui, Hui and Li, Fengling and Zhu, Lei and Li, Jingjing and Zhang, Zheng}, year={2024}, month=mar, pages={1941–1953} }

@article{Huang_Feng_Zhang_Fang_Ng_2015, title={Query-aware locality-sensitive hashing for approximate nearest neighbor search}, volume={9}, ISSN={2150-8097}, DOI={10.14778/2850469.2850470}, abstractNote={Locality-Sensitive Hashing (LSH) and its variants are the well-known indexing schemes for the c-Approximate Nearest Neighbor (c-ANN) search problem in high-dimensional Euclidean space. Traditionally, LSH functions are constructed in a query-oblivious manner in the sense that buckets are partitioned before any query arrives. However, objects closer to a query may be partitioned into diﬀerent buckets, which is undesirable. Due to the use of query-oblivious bucket partition, the state-of-the-art LSH schemes for external memory, namely C2LSH and LSB-Forest, only work with approximation ratio of integer c ≥ 2.}, number={1}, journal={Proceedings of the VLDB Endowment}, author={Huang, Qiang and Feng, Jianlin and Zhang, Yikai and Fang, Qiong and Ng, Wilfred}, year={2015}, month=sept, pages={1–12}, language={en} }

@inproceedings{Kuo_Chen_Chiang_Hsu_2009, address={Beijing China}, title={Query expansion for hash-based image object retrieval}, ISBN={978-1-60558-608-3}, url={https://dl.acm.org/doi/10.1145/1631272.1631284}, DOI={10.1145/1631272.1631284}, abstractNote={An efficient indexing method is essential for content-based image retrieval with the exponential growth in large-scale videos and photos. Recently, hash-based methods (e.g., locality sensitive hashing – LSH) have been shown efficient for similarity search. We extend such hash-based methods for retrieving images represented by bags of (high-dimensional) feature points. Though promising, the hash-based image object search suffers from low recall rates. To boost the hash-based search quality, we propose two novel expansion strategies – intra-expansion and inter-expansion. The former expands more target feature points similar to those in the query and the latter mines those feature points that shall co-occur with the search targets but not present in the query. We further exploit variations for the proposed methods. Experimenting in two consumer-photo benchmarks, we will show that the proposed expansion methods are complementary to each other and can collaboratively contribute up to 76.3 percent (average) relative improvement over the original hash-based method.}, booktitle={Proceedings of the 17th ACM international conference on Multimedia}, publisher={ACM}, author={Kuo, Yin-Hsi and Chen, Kuan-Ting and Chiang, Chien-Hsing and Hsu, Winston H.}, year={2009}, month=oct, pages={65–74}, language={en} }

@article{Lv_Josephson_Wang_Charikar_Li, title={Multi-Probe LSH: Efﬁcient Indexing for High-Dimensional Similarity Search}, abstractNote={Similarity indices for high-dimensional data are very desirable for building content-based search systems for featurerich data such as audio, images, videos, and other sensor data. Recently, locality sensitive hashing (LSH) and its variations have been proposed as indexing techniques for approximate similarity search. A signiﬁcant drawback of these approaches is the requirement for a large number of hash tables in order to achieve good search quality. This paper proposes a new indexing scheme called multi-probe LSH that overcomes this drawback. Multi-probe LSH is built on the well-known LSH technique, but it intelligently probes multiple buckets that are likely to contain query results in a hash table. Our method is inspired by and improves upon recent theoretical work on entropy-based LSH designed to reduce the space requirement of the basic LSH method. We have implemented the multi-probe LSH method and evaluated the implementation with two diﬀerent high-dimensional datasets. Our evaluation shows that the multi-probe LSH method substantially improves upon previously proposed methods in both space and time eﬃciency. To achieve the same search quality, multi-probe LSH has a similar timeeﬃciency as the basic LSH method while reducing the number of hash tables by an order of magnitude. In comparison with the entropy-based LSH method, to achieve the same search quality, multi-probe LSH uses less query time and 5 to 8 times fewer number of hash tables.}, author={Lv, Qin and Josephson, William and Wang, Zhe and Charikar, Moses and Li, Kai}, language={en} }

@article{Andoni_Indyk_Laarhoven_Razenshteyn_Schmidt_2015, title={Practical and Optimal LSH for Angular Distance}, url={http://arxiv.org/abs/1509.02897}, abstractNote={We show the existence of a Locality-Sensitive Hashing (LSH) family for the angular distance that yields an approximate Near Neighbor Search algorithm with the asymptotically optimal running time exponent. Unlike earlier algorithms with this property (e.g., Spherical LSH [1, 2]), our algorithm is also practical, improving upon the well-studied hyperplane LSH [3] in practice. We also introduce a multiprobe version of this algorithm, and conduct experimental evaluation on real and synthetic data sets. We complement the above positive results with a ﬁne-grained lower bound for the quality of any LSH family for angular distance. Our lower bound implies that the above LSH family exhibits a trade-oﬀ between evaluation time and quality that is close to optimal for a natural class of LSH functions.}, note={arXiv:1509.02897 [cs]}, number={arXiv:1509.02897}, publisher={arXiv}, author={Andoni, Alexandr and Indyk, Piotr and Laarhoven, Thijs and Razenshteyn, Ilya and Schmidt, Ludwig}, year={2015}, month=sept, language={en} }

@inproceedings{Indyk_Motwani_1998, title={Approximate nearest neighbors: towards removing the curse of dimensionality}, ISBN={978-0-89791-962-3}, booktitle={Proceedings of the thirtieth annual ACM symposium on Theory of computing  - STOC ’98}, publisher={ACM Press}, author={Indyk, Piotr and Motwani, Rajeev}, year={1998}, pages={604–613}}

@inproceedings{mirflickr,
author = {Huiskes, Mark J. and Lew, Michael S.},
title = {The MIR flickr retrieval evaluation},
year = {2008},
isbn = {9781605583129},
publisher = {Association for Computing Machinery},
address = {New York, NY, USA},
booktitle = {Proceedings of the 1st ACM International Conference on Multimedia Information Retrieval},
pages = {39–43},
numpages = {5},
location = {Vancouver, British Columbia, Canada},
series = {MIR '08}
}

@misc{dino2,
      title={DINOv2: Learning Robust Visual Features without Supervision}, 
      author={Maxime Oquab and Timothée Darcet and Théo Moutakanni and Huy Vo and Marc Szafraniec and Vasil Khalidov and et al.},
      year={2023},
      url = {https://arxiv.org/abs/2304.07193},
      eprint={2304.07193},
      archivePrefix={arXiv},
      primaryClass={cs.CV}
}

@inproceedings{glove,
  author = {Jeffrey Pennington and Richard Socher and Christopher D. Manning},
  booktitle = {Empirical Methods in Natural Language Processing (EMNLP)},
  title = {GloVe: Global Vectors for Word Representation},
  year = {2014},
  pages = {1532--1543},
}

@article{gooaq2021,
  title={GooAQ: Open Question Answering with Diverse Answer Types},
  author={Khashabi, Daniel and Ng, Amos and Khot, Tushar and Sabharwal, Ashish and Hajishirzi, Hannaneh and Callison-Burch, Chris},
  journal={arXiv preprint},
  year={2021}
}

@inproceedings{sisap25, address={Cham}, title={Learning to Find Good Hash Functions for Embeddings}, ISBN={978-3-032-06069-3}, abstractNote={Our context of interest is locality sensitive hashing over embeddings generated by neural networks, in particular binary locality sensitive functions. These functions map objects in the embedding space to a binary value. To be locality sensitive, similar objects must have a higher probability of generating a hash collision than randomly selected objects. In this paper, we investigate the use of locality sensitive hash functions to address the ANN problem. We demonstrate that whilst uniform spaces exhibit good locality sensitivity properties, spaces derived from the output of neural networks do not. We describe a method to dynamically select instances of good locality sensitive hash functions on a per-query basis, based on a sample of the dataset. Using this analysis, we demonstrate a mechanism to efficiently search embedding spaces with linear preprocessing cost, allowing fast build times for large datasets.}, booktitle={Similarity Search and Applications}, publisher={Springer Nature Switzerland}, author={Claydon, Ben and Connor, Richard and Dearle, Alan}, year={2026}, pages={345–353} }

@book{muirhead2005aspects,
  author = {Muirhead, Robb J.},
  publisher = {Wiley-Interscience},
  title = {Aspects of Multivariate Statistical Theory},
  year = 2005
}

@inproceedings{Levina_Bickel, title={Maximum Likelihood Estimation of Intrinsic Dimension}, volume={17}, url={https://proceedings.neurips.cc/paper_files/paper/2004/file/74934548253bcab8490ebd74afed7031-Paper.pdf}, booktitle={Advances in Neural Information Processing Systems}, publisher={MIT Press}, author={Levina, Elizaveta and Bickel, Peter}, editor={Saul, L. and Weiss, Y. and Bottou, L.}, year={2004} }

@article{Muller_1959, title={A note on a method for generating points uniformly on n-dimensional spheres}, volume={2}, ISSN={0001-0782}, DOI={10.1145/377939.377946}, number={4}, journal={Communications of the ACM}, author={Muller, Mervin E.}, year={1959}, month=apr, pages={19–20} }

@article{Zhou_Huang_2003, title={Relevance feedback in image retrieval: A comprehensive review}, volume={8}, rights={http://www.springer.com/tdm}, ISSN={0942-4962, 1432-1882}, DOI={10.1007/s00530-002-0070-3}, abstractNote={We analyze the nature of the relevance feedback problem in a continuous representation space in the context of content-based image retrieval. Emphasis is put on exploring the uniqueness of the problem and comparing the assumptions, implementations, and merits of various solutions in the literature. An attempt is made to compile a list of critical issues to consider when designing a relevance feedback algorithm. With a comprehensive review as the main portion, this paper also offers some novel solutions and perspectives throughout the discussion.}, number={6}, journal={Multimedia Systems}, author={Zhou, Xiang Sean and Huang, Thomas S.}, year={2003}, month=apr, pages={536–544}, language={en} }

@article{rocchio1971relevance,
  title={Relevance feedback in information retrieval},
  author={Rocchio Jr, Joseph John},
  journal={The SMART retrieval system: experiments in automatic document processing},
  year={1971},
  publisher={Englewood Cliffs}
}

@inproceedings{Vadicamo_Scotti_Dearle_Connor_2025, address={Singapore}, title={Comparative Analysis of Relevance Feedback Techniques for Image Retrieval}, ISBN={978-981-96-2054-8}, DOI={10.1007/978-981-96-2054-8_16}, abstractNote={Relevance feedback mechanisms have garnered significant attention in content-based image and video retrieval thanks to their effectiveness in refining search results to better meet user information needs. This paper provides a comprehensive comparative analysis of four techniques: Rocchio, PicHunter, Polyadic Query, and linear Support Vector Machines, representing diverse strategies encompassing query vector modification, relevance probability estimation, adaptive similarity metrics, and classifier learning. We conducted experiments within an interactive image retrieval system, with varying amounts of user feedback: full feedback, limited positive feedback, and mixed feedback. In particular, we introduce novel enhanced versions of PicHunter and Polyadic search incorporating negative feedback. Our findings highlight the benefits of integrating both positive and negative examples, demonstrating significant performance improvements. Overall, SVM and our improved PicHunter outperformed the other approaches for ad-hoc search, especially in cases in which the feedback process is iterated several times.}, booktitle={MultiMedia Modeling}, publisher={Springer Nature}, author={Vadicamo, Lucia and Scotti, Francesca and Dearle, Alan and Connor, Richard}, editor={Ide, Ichiro and Kompatsiaris, Ioannis and Xu, Changsheng and Yanai, Keiji and Chu, Wei-Ta and Nitta, Naoko and Riegler, Michael and Yamasaki, Toshihiko}, year={2025}, pages={206–219}, language={en} }

@inproceedings{Claydon_Connor_Dearle_Vadicamo_2025, address={Cham}, title={Demonstrating the Efficacy of Polyadic Queries}, ISBN={978-3-031-75823-2}, abstractNote={Similarity search is normally defined to be the task of identifying those objects, from a large collection, that are most similar to a further single object presented as a query. Using polyadic queries, a small set of objects are presented to the system, with the intent of finding those objects most similar to all elements of the query set. A few scenarios have previously demonstrated the usefulness of this notion. For example, we may be searching for images similar to a red balloon over a lake. With a single query, it is impossible to tell if the intent is to search for other images of balloons over lakes, or for other red balloons in any background. If instead we could present a system with a few different images of balloons, all of which are either all red, or all over lakes, the similarity search engine may be able to respond more appropriately. In this paper we demonstrate software which permits the user to provide explicit feedback by selecting the best few results from an intermediate set which are best suited to their original information need. A polyadic query can be formed from this set, which should give better results with a minimum of user interaction.}, booktitle={Similarity Search and Applications}, publisher={Springer Nature Switzerland}, author={Claydon, Ben and Connor, Richard and Dearle, Alan and Vadicamo, Lucia}, year={2025}, pages={49–56} }

@inproceedings{word_expansion,
author = {Kuzi, Saar and Shtok, Anna and Kurland, Oren},
title = {Query Expansion Using Word Embeddings},
year = {2016},
isbn = {9781450340731},
publisher = {Association for Computing Machinery},
address = {New York, NY, USA},
url = {https://doi.org/10.1145/2983323.2983876},
doi = {10.1145/2983323.2983876},
booktitle = {Proceedings of the 25th ACM International on Conference on Information and Knowledge Management},
pages = {1929–1932},
numpages = {4},
location = {Indianapolis, Indiana, USA},
series = {CIKM '16}
}

@article{Xu_Croft_2000, title={Improving the effectiveness of information retrieval with local context analysis}, volume={18}, ISSN={1046-8188}, DOI={10.1145/333135.333138}, abstractNote={Techniques for automatic query expansion have been extensively studied in information research as a means of addressing the word mismatch between queries and documents. These techniques can be categorized as either global or local. While global techniques rely on analysis of a whole collection to discover word relationships, local techniques emphasize analysis of the top-ranked documents retrieved for a query. While local techniques have shown to be more effective that global techniques in general, existing local techniques are not robust and can seriously hurt retrieved when few of the retrieval documents are relevant. We propose a new technique, called local context analysis, which selects expansion terms based on cooccurrence with the query terms within the top-ranked documents. Experiments on a number of collections, both English and non-English, show that local context analysis offers more effective and consistent retrieval results.}, number={1}, journal={ACM Trans. Inf. Syst.}, author={Xu, Jinxi and Croft, W. Bruce}, year={2000}, month=jan, pages={79–112} }

@article{Mikolov_Chen_Corrado_Dean_2013, title={Efficient Estimation of Word Representations in Vector Space}, url={http://arxiv.org/abs/1301.3781}, DOI={10.48550/arXiv.1301.3781}, abstractNote={We propose two novel model architectures for computing continuous vector representations of words from very large data sets. The quality of these representations is measured in a word similarity task, and the results are compared to the previously best performing techniques based on different types of neural networks. We observe large improvements in accuracy at much lower computational cost, i.e. it takes less than a day to learn high quality word vectors from a 1.6 billion words data set. Furthermore, we show that these vectors provide state-of-the-art performance on our test set for measuring syntactic and semantic word similarities.}, note={arXiv:1301.3781 [cs]}, number={arXiv:1301.3781}, publisher={arXiv}, author={Mikolov, Tomas and Chen, Kai and Corrado, Greg and Dean, Jeffrey}, year={2013}, month=sept }

@article{searching_metric_spaces, title={Searching in metric spaces}, volume={33}, ISSN={0360-0300, 1557-7341}, DOI={10.1145/502807.502808}, abstractNote={The problem of searching the elements of a set that are close to a given query element under some similarity criterion has a vast number of applications in many branches of computer science, from pattern recognition to textual and multimedia information retrieval. We are interested in the rather general case where the similarity criterion defines a metric space, instead of the more restricted case of a vector space. Many solutions have been proposed in different areas, in many cases without cross-knowledge. Because of this, the same ideas have been reconceived several times, and very different presentations have been given for the same approaches. We present some basic results that explain the intrinsic difficulty of the search problem. This includes a quantitative definition of the elusive concept of “intrinsic dimensionality.” We also present a unified view of all the known proposals to organize metric spaces, so as to be able to understand them under a common framework. Most approaches turn out to be variations on a few different concepts. We organize those works in a taxonomy that allows us to devise new algorithms from combinations of concepts not noticed before because of the lack of communication between different communities. We present experiments validating our results and comparing the existing approaches. We finish with recommendations for practitioners and open questions for future development.}, number={3}, journal={ACM Computing Surveys}, author={Chávez, Edgar and Navarro, Gonzalo and Baeza-Yates, Ricardo and Marroquín, José Luis}, year={2001}, month=sept, pages={273–321}, language={en} }

\newpage
\appendix
\section{Description of Datasets}
\label{sec:datasets_appendix}

We use four high-dimensional vector datasets of different modalities for experimentation:

\begin{enumerate}
    \item MirFlickr is a collection of 1 million images submitted by Flickr users \cite{mirflickr}. We encode each of these images with the DinoV2S embedder \cite{dino2}, which outputs 384-dimensional embeddings of each image in Euclidean space. The dissimilarity metric for this space is either Euclidean or Cosine distance. Downloadable from \url{https://zenodo.org/records/15373201}.
    \item Twitter GloVe is a set of approximately 1.1 million word embeddings \cite{glove}. These embeddings were generated by analysing word co-occurrences from Twitter posts. The dissimilarity metric for this space is either Euclidean or Cosine distance. Downloadable from \url{https://nlp.stanford.edu/projects/glove/}.
    \item GOOAQ is a set of 3 million questions submitted to the Google search engine and their answers \cite{gooaq2021}. They were embedded using a sentence BERT model to produce 384 dimensional feature vectors.%
    Downloadable at \url{https://huggingface.co/datasets/sentence-transformers/gooaq}. %
    \item A uniform spherical dataset of 1 million points in 200 dimensions. Generated using the method described by Muller \cite{Muller_1959}.
\end{enumerate}

\section{Statistical Model of Hyperplane Hash Function}
\label{sec:statistical_model}

We derive a statistical model which describes the probability that a point $\mathbf{u} \in \mathbb{R}^d \; : \; ||\mathbf{u}||_2 = 1 $ and each member of some set of points $knn(\mathbf{q})$ will hash collide under the family $\mathcal{H}_{rp}$. This class of hash function is parametrised by a random hyperplane $\mathbf{w}$ passing through the origin in a random direction, which can be generated by a vector of numbers drawn from a standard Gaussian distribution. Any points which are at a greater height above $\mathbf{w}$ than a uniformly generated value $a \sim \mathcal{U}[min(X \mathbf{w}^T), max(X \mathbf{w}^T)]$ are assigned a value of $1$, and all others assigned a value of $0$. 

To build a model of how this set of hash functions will interact with a point $\mathbf{u}$ and a set $\knnq$, we begin with the distribution which describes the set of all normals to planes which parametrise all hash functions, as in Equation \ref{eq:w_distribution}. This is a standard multivariate Gaussian distribution, which is the co-ordinate-wise distribution for generating random hyperplane normals. 

\begin{equation}
    \mathbf{w} \sim \mathcal{N}\left( \mathbf{0}, \mathbf{\mathbf{I}} \right)
    \label{eq:w_distribution}
\end{equation}

We may condition the distribution of $\mathbf{w}$ given that $\mathbf{u} \cdot \mathbf{w} = b$, where $b \in \mathbb{R}$ represents the height of $\mathbf{u}$ above $\mathbf{w}$. The resultant distribution describes the subset of all planes which both pass through the origin and where $\mathbf{u}$ has a fixed dot product to $\mathbf{w}$, thereby fixing the height of $\mathbf{u}$ above $\mathbf{w}$. 
A formal derivation of conditional Gaussian distributions can be found in \cite{muirhead2005aspects}.

\begin{equation}
    \mathbf{w} \; | \; \mathbf{u} \cdot \mathbf{w} = b \sim \mathcal{N}(b\mathbf{u}, \mathbf{I} - \mathbf{u}\mathbf{u}^T)
    \label{eq:w_distribution_conditioned}
\end{equation}

Finally, we must model how this distribution of planes interacts with the set $knn(\mathbf{q})$. Consider the set $knn(\mathbf{q})$ as a $k \times d$ matrix $\mathbf{A}$, where the $i$th row is the $i$th near neighbour of the original query, $\mathbf{q}$. The matrix multiplication $\mathbf{Aw}$ will produce a vector of the form $[\mathbf{x}_0 \cdot \mathbf{w}, \mathbf{x}_1 \cdot \mathbf{w}, \cdots, x_{k-1} \cdot \mathbf{w}]$ where $\mathbf{x}_i$ is the $i$th near neighbour.

\begin{align}
\mathbf{A} \mathbf{w} \; | \; \mathbf{u} \cdot \mathbf{w} = b 
&\sim \mathcal{N}\big(\mathbf{A}b\mathbf{u},\, \mathbf{A}(\mathbf{I} - \mathbf{u}\mathbf{u}^{T})\mathbf{A}^{T}\big) \\
&\sim \mathcal{N}\big(\mathbf{A}b\mathbf{u},\, \mathbf{A}\mathbf{A}^{T} - (\mathbf{A}\mathbf{u})(\mathbf{A}\mathbf{u})^{T}\big)
\label{eq:w_dist}
\end{align}

We may alternatively write this distribution in terms of dot products.
\begin{equation}
    \mathbf{A} \mathbf{w} \; | \; \mathbf{u} \cdot \mathbf{w} = b \sim \mathcal{N} \left(\mu_i = b (\mathbf{u} \cdot \mathbf{x}_i)
, \Sigma_{ij}=\mathbf{x}_i\cdot\mathbf{x}_j-(\mathbf{x}_i\cdot\mathbf{u})(\mathbf{x}_j\cdot\mathbf{u})
\right), \ 0 \leq i,j < k
\label{eq:hp_normal_form}
\end{equation}

This multivariate Gaussian distribution models the mean and covariance of the values $\mathbf{x}_i \cdot \mathbf{w}$ for all $\mathbf{x}_i \in \knnq$. The hyperplane hash function may be modelled by drawing and thresholding on a random value $a$, thus creating a binary output:

\begin{align}
\mathbf{A} \mathbf{w} \; | \; \mathbf{u} \cdot \mathbf{w} = b \sim \mathcal{N}\big(\mathbf{A}b\mathbf{u},\, \mathbf{A}\mathbf{A}^{T} - (\mathbf{A}\mathbf{u})(\mathbf{A}\mathbf{u})^{T}\big)  > a \mathbf{1}_k
\label{eq:binary_lsh_matrix}
\end{align}

This models the inner products between $\mathbf{A}$ and $\mathbf{w}$. However, there is not an exact hash collision probability formula for $\mathcal{H}_{rp}$. Instead, our strategy will be to show that the hash collision probability for $\mathcal{H}_{rp}$ is similar to $\mathcal{H}_{c}$ in Section \ref{sec:proof_a_is_0}. We may then consider only the closed-form collision probability of $\mathcal{H}_c$ thereafter. Using this formula, we will show that $\normc$ is a good instance of $\mathbf{u}$ by proving that it maximises the average collision probability, to a first-order approximation.

\subsection{Approximating the Effect of $a$}
\label{sec:proof_a_is_0}

Consider Equation \ref{eq:binary_lsh_matrix}. We may rewrite this expression as a convolution of a normal and 1-dimensional uniform distribution by subtracting $a$ from both sides of the inequality:

\[
\mathcal{N}\big(\mathbf{A}b\mathbf{u},\, \mathbf{A}\mathbf{A}^{T} - (\mathbf{A}\mathbf{u})(\mathbf{A}\mathbf{u})^{T}\big) - a \mathbf{1}_k > 0
\]

More precisely, $a$ is distributed as a degenerate $k$-dimensional uniform distribution, whereby a single, fixed, uniform scalar value is drawn between $min(X \mathbf{w}^T), max(X \mathbf{w}^T)$ (where each $X$ is the set of points being hashed, thereby making these bounds the height of the furthest points in $X$ above and below $\mathbf{w}$), which is repeated for all dimensions. We denote this distribution $\mathcal{U}$.  

\[
\mathcal{N}\big(\mathbf{A}b\mathbf{u},\, \mathbf{A}\mathbf{A}^{T} - (\mathbf{A}\mathbf{u})(\mathbf{A}\mathbf{u})^{T}\big) - \mathcal{U} \left(min(X \mathbf{w}^T), max(X \mathbf{w}^T) \right) > 0
\]

For ease of notation, we let $W \sim \mathcal{N}\big(\mathbf{A}b\mathbf{u},\, \mathbf{A}\mathbf{A}^{T} - (\mathbf{A}\mathbf{u})(\mathbf{A}\mathbf{u})^{T}\big)$ and $Y \sim \mathcal{U} \left(min(X \mathbf{w}^T), max(X \mathbf{w}^T) \right)$. We further define a new distribution $Z = W - Y$. 

Maximising ACP is equivalent to maximising the probability of the each co-ordinate of $Z$ having the same sign as $b - a$, as the hash of both $\mathbf{u}$ and $\mathbf{x}_i$ will be equal in this case. However, this property is very hard to prove for $Z$ itself, due to the complexity of the distribution%
\footnote{The author did derive the distribution of $Z$, but there is little to be gained from recounting it.}. However, we show that finding a $\mathbf{u}$ which maximises the expected value of $W$ also implies that the $\mathbf{u}$ maximises the expected value of $Z$. Similarly, the $\mathbf{u}$ which minimises the variance of $W$ directly implies it minimises the variance of $Z$. By showing this, we may work with the much more wieldy $X$ to prove properties of $Z$, greatly simplifying our derivation.

To achieve this, we must first assume that $max(X \mathbf{w}^T) \approx -min(X \mathbf{w}^T)$. This is because the distribution of $ X \mathbf{w}^T$ is itself Gaussian, if each member of $X$ is reasonably decorrelated%
\footnote{This should be true, given a correct level of dimensionality reduction has been performed on the data.}. We may use this fact to derive a simple mean and variance of the degenerate uniform distribution $Y$:

\[
\mathbb{E}[Y] = 0
\]

\[
\Sigma[Y] = \frac{max(\mathbf{x_i} \cdot \mathbf{w})^2}{3}\mathbf{J}_k
\]

where $\mathbf{J}_k$ is the $k \times k$ matrix of ones. To compute the mean and variance of the distribution $Z$, as we assume $X$ and $Y$ are independent (which is reasonable, as the maximum and minimum of a large collection of Gaussian vectors is approximately constant). We can thus rewrite the mean and variance of $Z$ in terms of $W$ and $Y$:

\begin{align*}
    \mathbb{E}[Z] &= \mathbb{E}[W] -\mathbb{E}[Y]\\
    \implies \mathbb{E}[Z] &= \mathbb{E}[W]
\end{align*}

\[
\Sigma[Z] = \Sigma[W] + \frac{max(\mathbf{x_i} \cdot \mathbf{w})^2}{3}\mathbf{J}_k
\]

As the expected values of $W$ and $Z$ are equal, and the covariance matrix of $Z$ is simply the covariance matrix of $X$ plus some constant, it follows that $max(\mathbb{E}[W]) \implies max(\mathbb{E}[Z])$, and $min(Var[W]) \implies min(Var[Z])$. Thus we work only with the much simpler distribution:

\begin{equation}
    W = 
    \mathbf{A} \mathbf{w} \; | \; \mathbf{u} \cdot \mathbf{w} = b \sim \mathcal{N}\big(\mathbf{A}b\mathbf{u},\, \mathbf{A}\mathbf{A}^{T} - (\mathbf{A}\mathbf{u}) (\mathbf{A}\mathbf{u})^{T}\big) > 0
    \label{eq:X}
\end{equation}

Moreover, proving this relationship allows us to simultaneously demonstrate properties of both Charikar's binary hash function (where $a=0$), and the hash function used by RP-Forest (which takes a uniform value of $a$), as the value $a$ has been elided from our analysis. We emphasise that the results derived for $\mathcal{H}_{rp}$ are approximations under the stated independence and symmetry assumptions.

\section{$\normc$ Increases Mean Collision Probability}
\label{sec:proofs}

We prove that setting $\mathbf{u} = \normc$, to a stated degree of error, maximises the probability of the hashes of $\mathbf{u}$ and some $\mathbf{x}_i \in knn(\mathbf{q})$ colliding. This proof is split into two main parts: Theorem \ref{thm:max_e} shows that setting $\mathbf{u} = \normc$ maximises the expected value of the distribution derived in Equation \ref{eq:X} if $b > 0$, and minimises it if $b < 0$; Theorem \ref{thm:collision_dot} shows that maximising this expected value also maximises the average collision probability to a first order approximation. In Section \ref{sec:avg_covariance}, we prove that the variance of the distribution in Equation \ref{eq:X} is minimised if $\mathbf{u} = \normc$. This pair of properties is sufficient to motivate that $\normc$ is highly likely to hash collide with members of $\knnq$.

\begin{theorem}
\label{thm:max_e}
    If $\mathbf{u} = \normc$, $\mathbb{E}[\sum_i(\mathbf{w} \cdot \mathbf{x_i} \; | \; \mathbf{u} \cdot \mathbf{w} = b)]$ is maximised if $b > 0$, and minimised if $b < 0$.
\end{theorem}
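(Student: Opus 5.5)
The plan is to compute the conditional expectation explicitly from Equation \ref{eq:hp_normal_form}, and then reduce the statement to a Cauchy--Schwarz argument on the unit sphere. By Equation \ref{eq:hp_normal_form}, conditioned on $\mathbf{u}\cdot\mathbf{w}=b$, each coordinate $\mathbf{x}_i\cdot\mathbf{w}$ has mean $b(\mathbf{u}\cdot\mathbf{x}_i)$. Linearity of expectation then gives
\[
\mathbb{E}\Big[\sum_i \mathbf{w}\cdot\mathbf{x}_i \;\Big|\; \mathbf{u}\cdot\mathbf{w}=b\Big] = b\,\mathbf{u}\cdot\sum_{i=0}^{k-1}\mathbf{x}_i = b\,k\,(\mathbf{u}\cdot\mathbf{c}),
\]
where $\mathbf{c}=\frac{1}{k}\sum_i\mathbf{x}_i$ is the geometric centroid of $\knnq$. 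The whole dependence on $\mathbf{u}$ is therefore through the linear functional $\mathbf{u}\mapsto\mathbf{u}\cdot\mathbf{c}$, multiplied by the sign-carrying scalar $bk$.

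Next I optimise over the admissible set. As in the definition of $\Phi$ and in Appendix \ref{sec:statistical_model}, $\mathbf{u}$ ranges over the unit sphere, $\|\mathbf{u}\|_2=1$. By Cauchy--Schwarz,
\[
-\|\mathbf{c}\|_2 \le \mathbf{u}\cdot\mathbf{c} \le \|\mathbf{c}\|_2 .
\]
Equality on the right holds exactly when $\mathbf{u}=\mathbf{c}/\|\mathbf{c}\|_2=\normc$, and on the left exactly when $\mathbf{u}=-\normc$. I then split on the sign of $b$:
\begin{itemize}
\item if $b>0$, the expectation $bk(\mathbf{u}\cdot\mathbf{c})$ is maximised by maximising $\mathbf{u}\cdot\mathbf{c}$, so the maximiser is $\normc$, with value $bk\|\mathbf{c}\|_2$;
\item if $b<0$, multiplying by $bk$ reverses the order, so $\mathbf{u}=\normc$ gives the minimum value $bk\|\mathbf{c}\|_2$, the most negative achievable.
\end{itemize}
This is precisely the claimed dichotomy. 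It also fits the hashing intuition: when $\normc$ lies below the hyperplane, the neighbours are pushed as far below it as possible, which again favours collision.

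I do not expect a real obstacle, but two points of interpretation need care. First, the claim only makes sense with the norm constraint on $\mathbf{u}$, and the conditioning formula $\mathcal{N}(b\mathbf{u},\mathbf{I}-\mathbf{u}\mathbf{u}^T)$ itself assumes $\|\mathbf{u}\|_2=1$; I will state this restriction explicitly. Second, $\mathbf{c}$ must be nonzero for $\normc$ to be defined; I will note this, and observe that when $\mathbf{c}=\mathbf{0}$ the expectation is identically $0$. I will also remark that the statement should be read with $b$ held fixed while $\mathbf{u}$ varies. Because of the approximation in Section \ref{sec:proof_a_is_0}, the same conclusion transfers to $\mathbb{E}[Z]$ for $\mathcal{H}_{rp}$ as well as to $\mathcal{H}_c$.
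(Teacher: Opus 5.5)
Your proof is correct and is essentially the paper's argument: both reduce the conditional expectation to $b\,\mathbf{u}\cdot\sum_i\mathbf{x}_i$ and conclude with Cauchy--Schwarz on the unit sphere, with equality attained at $\normc$. The differences are minor improvements on your side. You reach that expression directly by linearity instead of the paper's round trip through the polarisation identity, so you never need $\|\mathbf{x}_i\|_2=1$. You also treat the $b<0$ case and the equality condition explicitly, which the paper leaves implicit.
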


\begin{proof}
    As our domain of interest is $\ell_2$-normed spaces, we may rewrite the expectation of Equation \ref{eq:hp_normal_form} using the polarisation identity:
    \[
    b \cdot
    \begin{pmatrix}
    1 - \frac{1}{2}||\mathbf{u} - \mathbf{x}_1||^2_2 \\
    1 - \frac{1}{2}||\mathbf{u} - \mathbf{x}_2||^2_2  \\
    \vdots \\
    1 - \frac{1}{2}||\mathbf{u} - \mathbf{x}_k||^2_2 
    \end{pmatrix}
    \]

To maximise the sum of all elements, we may show that:

\begin{align*}
    &\argmax_{||\mathbf{u}||_2 = 1} \; 1 - \frac{1}{2}\sum_{i=0}^{k-1} ||\mathbf{u} -\mathbf{x}_i||_2^2\\
    &=\argmin_{||\mathbf{u}||_2 = 1} \sum_{i=0}^{k-1} ||\mathbf{u} -\mathbf{x}_i||_2^2\\
    &=\argmin_{||\mathbf{u}||_2 = 1} \sum_{i=0}^{k-1} 2 - 2(\mathbf{u} \cdot \mathbf{x_i})\\
    &=\argmin_{||\mathbf{u}||_2 = 1} 2k - 2\sum_{i=0}^{k-1}(\mathbf{u} \cdot \mathbf{x_i})\\
    &=\argmin_{||\mathbf{u}||_2 = 1} 2k - 2 \mathbf{u} \cdot  \sum_{i=0}^{k-1}\mathbf{x_i}\\
    &=\argmax_{||\mathbf{u}||_2 = 1} \mathbf{u} \cdot \sum_{i=0}^{k-1}\mathbf{x_i}
\end{align*}

We must now choose the vector $\mathbf{u}$ which maximises this inner product under restriction that $||\mathbf{u}||_2=1$. To begin, we apply the Cauchy-Schwarz inequality:

\begin{align*}
    \mathbf{u} \cdot \sum_{i=0}^{k-1}\mathbf{x_i} &\leq ||\mathbf{u}||_2 ||\sum_{i=0}^{k-1}\mathbf{x_i}||_2\\
    \mathbf{u} \cdot \sum_{i=0}^{k-1}\mathbf{x_i} &\leq ||\sum_{i=0}^{k-1}\mathbf{x_i}||_2
\end{align*}

Thus, the maximum of the objective function $\max_{||\mathbf{u}||_2 = 1} \mathbf{u} \cdot \sum_{i=0}^{k-1}\mathbf{x_i}$, for the best possible $\mathbf{u}$, is $||\sum_{i=0}^{k-1}\mathbf{x_i}||_2$.

Substituting $\mathbf{u} = \normc$ into the objective function, we show that this point attains that maximum value:

\begin{align*}
    &= \frac{\mathbf{c}}{||\mathbf{c}||_2} \cdot \sum_{i=0}^{k-1}\mathbf{x_i}\\
    &= \frac{\sum_{i=0}^{k-1}\mathbf{x_i}}{k||\frac{1}{k}\sum_{i=0}^{k-1}\mathbf{x_i}||_2} \cdot \sum_{i=0}^{k-1}\mathbf{x_i}\\
    &= \frac{\sum_{i=0}^{k-1}\mathbf{x_i}}{||\sum_{i=0}^{k-1}\mathbf{x_i}||_2} \cdot \sum_{i=0}^{k-1}\mathbf{x_i}\\
    &= \frac{(\sum_{i=0}^{k-1}\mathbf{x_i}) \cdot (\sum_{i=0}^{k-1}\mathbf{x_i})}{||\sum_{i=0}^{k-1}\mathbf{x_i}||_2}\\
    &= \frac{||\sum_{i=0}^{k-1}\mathbf{x_i}||^2_2}{||\sum_{i=0}^{k-1}\mathbf{x_i}||_2}\\
    &= ||\sum_{i=0}^{k-1}\mathbf{x_i}||_2
\end{align*}

and thus $\normc$ is the $\ell_2$ normalised vector which maximises the expected value of Equation \ref{eq:hp_normal_form} as required.

\end{proof}

\begin{theorem}
    $\normc$ is the unit vector which maximises the first-order approximation of the average collision probability.
    \label{thm:collision_dot}
\end{theorem}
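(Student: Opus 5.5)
Having reduced (Appendix \ref{sec:proof_a_is_0}) to the Charikar case $a=0$, I will start from the exact collision probability of $\mathcal{H}_c$: for unit vectors $\mathbf{u},\mathbf{x}_i$,
\[
\Pr_{h\sim\mathcal{H}_c}\left[h(\mathbf{u})=h(\mathbf{x}_i)\right] = 1-\frac{\theta_i}{\pi},\qquad \theta_i=\arccos(\mathbf{u}\cdot\mathbf{x}_i).
\]
Substituting into Equation \ref{eq:average_collision_probability} gives
\[
ACP(\mathbf{u},\knnq)=1-\frac{1}{k\pi}\sum_{i=0}^{k-1}\arccos(\mathbf{u}\cdot\mathbf{x}_i).
\]
Maximising the ACP is therefore equivalent to minimising $\sum_i \arccos(\mathbf{u}\cdot\mathbf{x}_i)$ over the unit sphere. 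Since $\arccos$ is nonlinear, this is not directly a function of $\sum_i \mathbf{x}_i$. That nonlinearity is where the first-order approximation enters.

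I will linearise $\arccos$ about a reference value $t_0$ that represents the typical similarity between $\mathbf{u}$ and the neighbours. A natural choice is the mean $t_0=\frac1k\sum_i \mathbf{u}\cdot\mathbf{x}_i$; alternatively one can take $t_0$ to be a common value if the neighbours lie roughly on a cap. The first-order expansion is
\[
\arccos(t)\approx \arccos(t_0)-\frac{t-t_0}{\sqrt{1-t_0^2}}.
\]
Summing over $i$ gives
\[
\sum_i\arccos(\mathbf{u}\cdot\mathbf{x}_i)\approx k\arccos(t_0)-\frac{1}{\sqrt{1-t_0^2}}\sum_i(\mathbf{u}\cdot\mathbf{x}_i-t_0).
\]
The essential point is that $\arccos$ is strictly decreasing on $(-1,1)$, with derivative $-1/\sqrt{1-t^2}<0$. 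So, to first order, the ACP is a strictly increasing affine function of $\sum_i \mathbf{u}\cdot\mathbf{x}_i=\mathbf{u}\cdot\sum_i\mathbf{x}_i$.

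With the linearisation in hand, I will invoke Theorem \ref{thm:max_e}. Its proof shows, via the Cauchy--Schwarz argument, that $\mathbf{u}\cdot\sum_i\mathbf{x}_i$ is maximised over unit vectors exactly at $\mathbf{u}=\normc$. Equality in Cauchy--Schwarz requires $\mathbf{u}$ to be a positive multiple of $\sum_i\mathbf{x}_i$, so the maximiser is unique whenever $\sum_i\mathbf{x}_i\neq\mathbf{0}$. Composing with the increasing affine map then gives that $\normc$ maximises the first-order ACP. The nondegeneracy $\sum_i\mathbf{x}_i\neq\mathbf{0}$ is automatic for a genuine near-neighbour set, and I will state it explicitly.

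The main obstacle is making the linearisation honest. If $t_0$ depends on $\mathbf{u}$, then the expansion coefficient $1/\sqrt{1-t_0^2}$ and the constant $k\arccos(t_0)$ also vary with $\mathbf{u}$, and the argmax argument no longer goes through cleanly. I will handle this by fixing the expansion point as a constant independent of $\mathbf{u}$, namely the typical near-neighbour similarity. The first-order objective is then exactly $\text{const}+\lambda\,\mathbf{u}\cdot\sum_i\mathbf{x}_i$ with $\lambda>0$. I will also bound the neglected second-order term $\frac{t_0}{2(1-t_0^2)^{3/2}}\sum_i(\mathbf{u}\cdot\mathbf{x}_i-t_0)^2$, which is controlled by the spread of the similarities $\mathbf{x}_i\cdot\mathbf{u}$. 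This gives the ``stated degree of error'' promised at the start of Appendix \ref{sec:proofs}.
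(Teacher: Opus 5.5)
Your argument is essentially the paper's. The paper also linearises $\arccos$, using its Maclaurin expansion (your case $t_0=0$, where the slope is $1/(k\pi)$ and the quadratic term vanishes), so that the approximate ACP becomes an increasing affine function of $\mathbf{u}\cdot\sum_i\mathbf{x}_i$, and then invokes Theorem~\ref{thm:max_e}; your general fixed expansion point and the Cauchy--Schwarz equality case are a mild strengthening, showing the maximiser is unique and does not depend on where you linearise.
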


\begin{proof}
    The collision probability between any two $\ell_2$-normalised vectors, for Charikar's hash function family $\mathcal{H}_{c}$\cite{Charikar_2002}, is:

    \[
    \Pr(h(\mathbf{a}) = h(\mathbf{b})) = 1 - \frac{cos^{-1}(\mathbf{a} \cdot \mathbf{b})}{\pi}
    \]

    Therefore, to maximise the average collision probability, we must maximise the objective function:

    \[
    \argmax_{||\mathbf{u}||_2 = 1} \frac{1}{k} \sum_{i=0}^{k-1} 1 - \frac{cos^{-1}( \mathbf{x}_i \cdot \mathbf{u})}{\pi}
    \]

    We approximate the function $\cos^{-1}$ with its first-order Maclaurin expansion.

    \[
    = \argmax_{||\mathbf{u}||_2 = 1} \frac{1}{k} \sum_{i=0}^{k-1} 1 - \frac{\frac{\pi}{2} -  \mathbf{x}_i \cdot \mathbf{u} + \mathcal{O}((\mathbf{x}_i \cdot \mathbf{u})^3)}{\pi}
    \]

    Simplifying:

    \[
    = \argmax_{||\mathbf{u}||_2 = 1} \frac{1}{2}
    +
    \frac{1}{k\pi}\sum_{i=0}^{k-1}\mathbf{x}_i \cdot \mathbf{u}
    +
    \mathcal{O}\!\left(
    \frac{1}{k}\sum_{i=0}^{k-1}
    (\mathbf{x}_i \cdot \mathbf{u})^3
    \right)
    \]
    
    By Theorem \ref{thm:max_e}, we have already shown that the term $\sum_{i=0}^{k-1}\mathbf{x}_i \cdot \mathbf{u}$ is maximised when $\mathbf{u} = \normc$. Therefore, using $\normc$ maximises the first-order approximation:

    \[
    \normc = \argmax_{||\mathbf{u}||_2 = 1}
    \frac{1}{2}
    +
    \frac{1}{k\pi}\sum_{i=0}^{k-1}\mathbf{x}_i \cdot \mathbf{u}
    \]
\end{proof}

\subsubsection{Discussion}

Upon first inspection, it may seem that a first-order approximation is insufficient to model the function $cos^{-1}$. However, it is accurate for lower values of the range of the function $[0, 1]$. This is demonstrated by Figure \ref{fig:approximation_vs_collision_prob} which shows the exact collision probability of two vectors with given inner product, and the first order approximation of this probability.

\begin{figure}[h]
    \centering
    \includegraphics[width=0.6\linewidth]{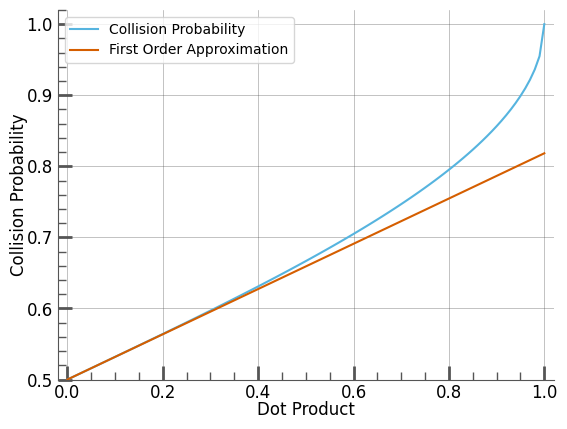}
    \caption{The hash collision probability ($y$ axis) of two $\ell_2$-normalised vectors with a given inner product ($x$ axis).}
    \label{fig:approximation_vs_collision_prob}
\end{figure}

From inspection, the first order approximation has little error as $\mathbf{x}_i \cdot \mathbf{u}$ is small. As $\mathbf{x}_i \cdot \mathbf{u} \rightarrow 1$, the error increases. However, there are two reasons that this increased error does not trouble us:

\begin{enumerate}
    \item If there exists a point $\mathbf{v}$ which is very close to $\mathbf{u}$, then its true collision probability will be underestimated by the first-order approximation. Therefore, it will not feature in the sum as prominently as it should.   However, their inner product will still be large even when this error is accounted for, and therefore it is likely that $\mathbf{v}$ will quickly be retrieved.
    \item Definitionally, the distribution of distances between points in a high-dimensional space has a high mean and low variance. Therefore, we expect few examples of high $\mathbf{u} \cdot \mathbf{v}$ \cite{searching_metric_spaces} in practice. 
\end{enumerate}

\subsection{$\normc$ Minimises Average Covariance}
\label{sec:avg_covariance}


    



\label{sec:off_diag_proof}

Here, we use the matrix $\Sigma$ as a proxy through which to analyse the output correlation binary hash functions as applied to a fixed pair of points. For a fixed pair of points $\mathbf{x}_i, \mathbf{x}_j$, a matrix entry $\Sigma_{i, j}$ value of $-1$ represents that, for all $h \in \mathcal{H}_{c}$, $h(\mathbf{x_i}) \neq h(\mathbf{x}_j)$, and a correlation of $+1$ represents the converse. A score of $0$ implies that the outcomes of the hash functions are decorrelated, and that the value $h(\mathbf{x}_i)$ provides no information on the value $h(\mathbf{x}_j)$. Positive covariance matrix entries imply a positive correlation of hash function outcomes, and negative covariance matrix entries imply a negative outcome. However, the relationship between the value of $\Sigma_{i, j}$ and the observed $\operatorname{Cov}(h(\mathbf{x}_i), h(\mathbf{x}_j))$ may not be linear.

We define the \emph{average correlation} more formally to be the mean value of $\Sigma_{i, j}$ for all  $i, j$. If the average correlation is high, then it is likely that many or all members of $knn(\mathbf{q})$ will have equal hash function output if a single function from $\mathcal{H}_{c}$ is applied. However, if $\mathbf{q}$ produces the opposite output, then hash failure occurs. To minimise the probability of hash failures, a $\mathbf{u}$ must be chosen which induces a low average correlation.

We show that using $\mathbf{u} = \normc$ minimises the average correlation:

\begin{proof}
    \begin{align*}
    &\argmin_{||\mathbf{u}||_2 = 1} \frac{1}{k^2} \sum_{i=0}^{k-1} \sum_{j=0}^{k-1} \mathbf{x}_i \cdot \mathbf{\mathbf{x}_j} - (\mathbf{x}_i \cdot \mathbf{u})(\mathbf{x}_j \cdot \mathbf{u}) \\
    &\texttt{Because $\mathbf{x}_i$, $\mathbf{x}_j$ are constant}\\
    &= \argmax_{||\mathbf{u}||_2 = 1} \sum_{i=0}^{k-1} \sum_{j=0}^{k - 1} (\mathbf{x}_i \cdot \mathbf{u})
    (\mathbf{x}_j \cdot \mathbf{u})\\
    &= \argmax_{||\mathbf{u}||_2 = 1} \left (\sum_{i=0}^{k-1} (\mathbf{x}_i \cdot \mathbf{u}) \right) \left (\sum_{i=0}^{k-1} (\mathbf{x}_i \cdot \mathbf{u}) \right)\\
    &= \argmax_{||\mathbf{u}||_2 = 1} ( \sum_{i=0}^{k-1} \mathbf{x}_i \cdot \mathbf{u})^2
\end{align*}
\end{proof}

As we showed in Theorem \ref{thm:max_e}, $\normc$ is the point which maximises the term $\sum_{i=0}^{k-1} \mathbf{x}_i \cdot \mathbf{u}$. Therefore, it also maximises $(\sum_{i=0}^{k-1} \mathbf{x}_i \cdot \mathbf{u})^2$. As $\Sigma$ is a proxy for describing how correlated hash functions are when applied to pairs of points, we expect that hash function correlation is similarly low as well.

\section{Distribution of Off-Diagonal Covariance Values}
\label{sec:off-diag}
We demonstrate that using $\mathbf{u} = \normc$ causes the \emph{average} off-diagonal entry of the covariance matrix to be approximately $0$. We measure covariance matrices for a large set of queries and record their distribution. We formulate this experiment as follows:

\begin{enumerate}
    \item For a given query, generate $1000$ random hyperplanes such that $|\mathbf{u} \cdot \mathbf{w}| < 0.005$ (approximating planes where $b=0$). Note that the choice of $b=0$ is arbitrary as the value of $b$ does not influence the covariance as per Equation \ref{eq:hp_normal_form}, but must be fixed such that the mean value is stable.
    \item For each plane $\mathbf{w}$, we compute the values $A\mathbf{w}$, or the height of each near neighbour above $\mathbf{w}$. We store these results in $D$, a $1000 \times 100$ matrix representing the heights of the $100$ nearest neighbours to the query over all $1000$ hyperplanes.
    \item Finally, we measure the covariance between the columns of the matrix $D$. This measures the degree of correlation between $\mathbf{x}_i \cdot \mathbf{w}$ and $\mathbf{x}_j \cdot \mathbf{w}$ over a large sample of planes. Recall that these covariances are bounded between $\pm 1$.
\end{enumerate}

Figure \ref{fig:covariance_hists} shows the distribution of all such covariances aggregated over all queries. For all datasets, setting $\mathbf{u} = \normc$ reduces the average covariance to approximately zero.

\begin{figure}
    \centering
    \begin{subfigure}{0.48\textwidth}
        \centering
        \includegraphics[width=\linewidth]{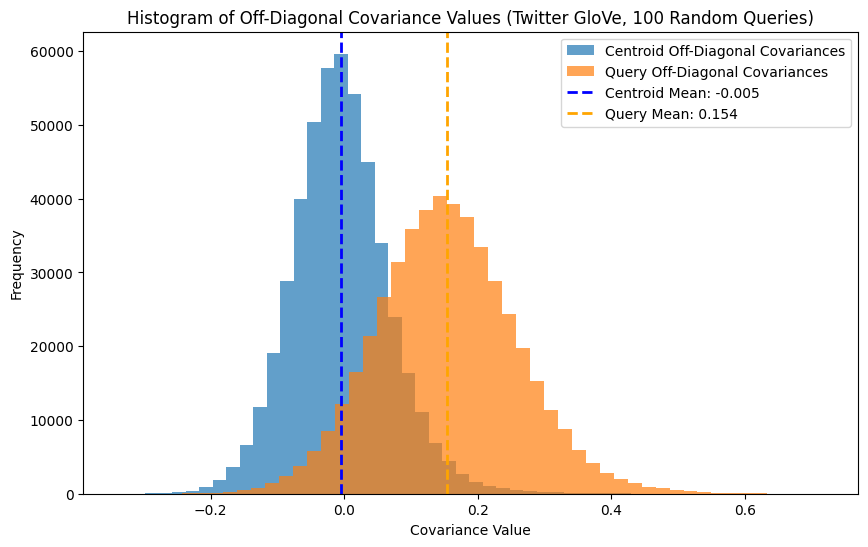}
        \caption{GloVe}
        \label{fig:cov_glove}
    \end{subfigure}
    \hfill
    \begin{subfigure}{0.48\textwidth}
        \centering
        \includegraphics[width=\linewidth]{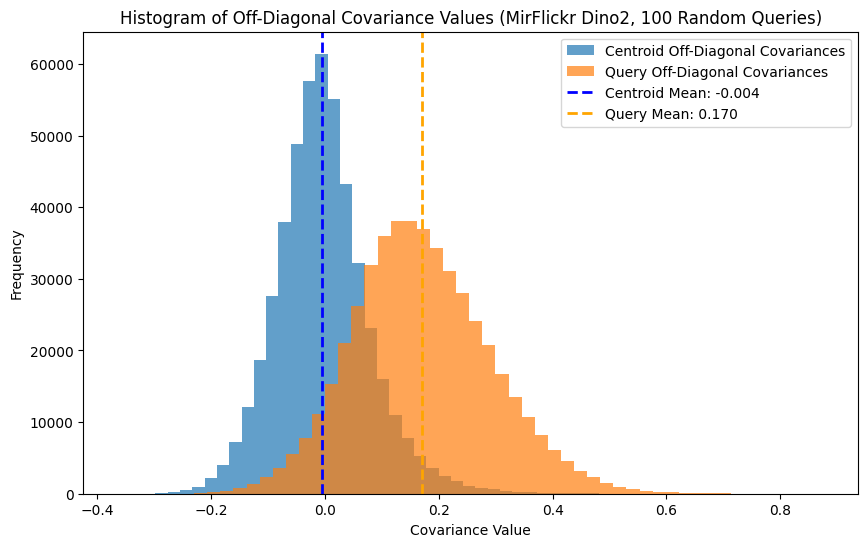}
        \caption{DINOv2}
        \label{fig:cov_dino2}
    \end{subfigure}

    \vspace{1em}
    \begin{subfigure}{0.48\textwidth}
        \centering
        \includegraphics[width=\linewidth]{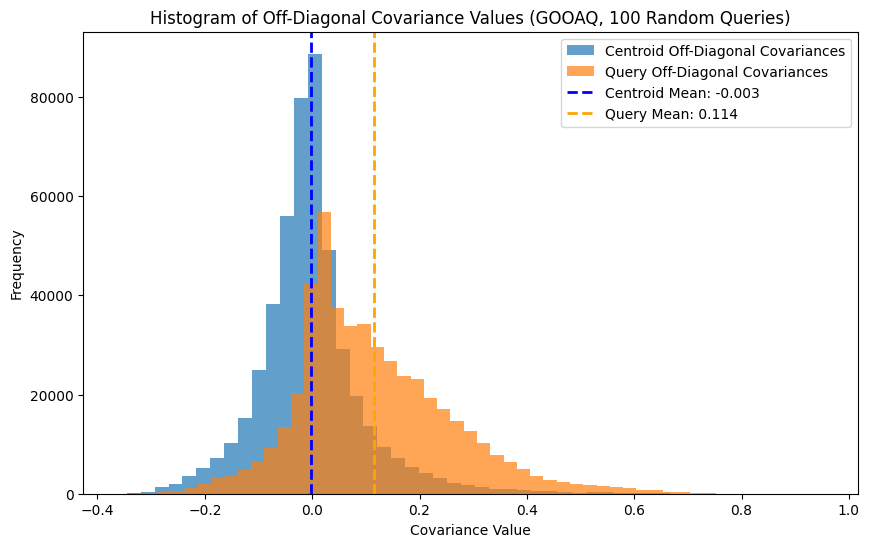}
        \caption{GooAQ}
        \label{fig:cov_gooaq}
    \end{subfigure}

    \caption{Aggregated off-diagonal covariance values when applying a large number of binary hash functions to $\normc$ (blue) and $\mathbf{q}$ (orange) as instances of $\mathbf{u}$. Measured over 100 random queries.}
    \label{fig:covariance_hists}
\end{figure}

\section{A Statistical Argument for $\mathbf{\chat}$}
\label{sec:clt_proof}
Here we present a statistical model which explains the observation that $||\mathbf{q} - \mathbf{c}||_2 > ||\mathbf{\chat} - \mathbf{c}||$ with a high probability. Recall that a smaller distance to $\mathbf{c}$ implies a better instance of $\chat$. Note that here we work with non-normalised vectors. In Section \ref{sec:statistic_discussion}, we discuss how the argument presented here should apply to the inequality $||\mathbf{q} - \normc||_2 > ||\normchat - \mathbf{c}||$ in non-pathological cases.

\subsection{Definitions and Preliminaries}
To begin, we define the terms:

\begin{itemize}
    \item $\boldsymbol{\mu} \in \mathbb{R}^d$: The user-provided query, previously denoted $\mathbf{q}$. We rename it to $\boldsymbol{\mu}$ here as we wish to emphasise the role of $\mathbf{q}$ as the mean value $\boldsymbol{\mu}$ of the local uniform model, and its role as the mean value as used by the central limit theorem.
    \item $\mathbf{x}_0 \cdots \mathbf{x}_{r-1} \in \mathbb{R}^d$ are the $r$ near neighbours of $\boldsymbol{\mu}$.
    \item $\mathbf{c} \in \mathbb{R}^d$ is the centroid of the $k$ near neighbour set of $\boldsymbol{\mu}$; $\{\mathbf{x}_0 \cdots \mathbf{x}_{k-1}\}$.
    
    \item $\mathbf{\chat} \in \mathbb{R}^d$ is the centroid of a random selection of $k$ objects from the set $\{\mathbf{x}_0 \cdots \mathbf{x}_{r-1}\}$ for some $r > k$. We denote this sample set $S$. We use $S$ to model an approximation of the set $knn(\mathbf{q})$ generated some coarse $k$-nearest-neighbour search.
\end{itemize}

We also assume that the set of points $\{\mathbf{x}_0 \cdots \mathbf{x}_{r-1} \}$ exist in some intrinsically $m$-dimensional space, and that the local density about $\mathbf{q}$, which we call $f(\mathbf{q})$, is approximately constant.  These are the same set of assumptions made by Levina and Bickel \cite{Levina_Bickel} in their seminal paper deriving the maximum likelihood estimator of local intrinsic dimensionality. We refer readers to this paper for a fuller treatment of local intrinsic dimensionality. A brief intuition is as follows: for some fixed point $\mathbf{x}$, the probability that some other point $\mathbf{y}$ exists in the dataset whereby $||\mathbf{x} - \mathbf{y}||_2 = t$ is $a V_m(t)$, where $V_m(t)$ is the volume of the hypersphere in $m$ dimensions of radius $t$, and $a$ is some normalising constant which is derived from the density $f(\mathbf{x})$. 

In the next sections, we present three theorems which will demonstrate that, under stated assumptions, $\Pr(||\mathbf{q} - \mathbf{c}||_2 > ||\mathbf{\chat} - \mathbf{c}||)$ is large. We begin in Theorem \ref{thm:clt}, in which we derive a normally distributed co-ordinate-wise approximation for the vectors $\mathbf{c} - \boldsymbol{\mu}$ and $\mathbf{\chat} - \boldsymbol{\mu}$ using the central limit theorem (CLT).

\begin{theorem}
\[
\cnobold_i - \mu_i \xrightarrow{d} \mathcal{N} \left(0, \sigma_1^2 \right)
\]

\[
\chatnobold_i - \mu_i \xrightarrow{d} \mathcal{N} \left(0, \sigma_2^2 \right)
\]

where $\sigma_1^2 \leq \sigma_2^2$
\label{thm:clt}
\end{theorem}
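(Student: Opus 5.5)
We model the neighbourhood of $\boldsymbol{\mu}$ with the local uniform model introduced above. The displacements $\mathbf{x}_j - \boldsymbol{\mu}$ are treated as i.i.d.\ draws from a distribution that is locally uniform on a ball about $\boldsymbol{\mu}$ in the intrinsic $m$-dimensional subspace, with density $f(\boldsymbol{\mu})$ approximately constant.

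\textbf{Mean zero.} By the spherical symmetry of this model, each coordinate of a displacement has mean zero: $\mathbb{E}[x_{j,i} - \mu_i] = 0$. We do not need symmetry to hold exactly, only that $\boldsymbol{\mu}$ is the mean of the local model. This is precisely why $\mathbf{q}$ was renamed $\boldsymbol{\mu}$.

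\textbf{CLT for $\mathbf{c}$.} Write
\[
c_i - \mu_i = \frac{1}{k}\sum_{j=0}^{k-1} (x_{j,i} - \mu_i).
\]
This is an average of $k$ terms with mean $0$ and finite variance, because the neighbourhood is bounded. The classical (Lindeberg--L\'evy) central limit theorem then gives $c_i - \mu_i \xrightarrow{d} \mathcal{N}(0, \sigma_1^2)$. Here $\sigma_1^2 = \tau_k^2/k$, and $\tau_k^2$ is the per-coordinate variance of a point drawn uniformly from the $k$-NN ball, whose radius is $t_k$.

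\textbf{CLT for $\chat$.} We argue the same way with the sample $S$. Its members are $k$ points chosen uniformly at random from the $r$-NN set, which under the model are i.i.d.\ uniform on the larger ball of radius $t_r$. Sampling without replacement from a finite set is a technicality: either approximate it by i.i.d.\ sampling, or cite the finite-population CLT. We obtain $\chatnobold_i - \mu_i \xrightarrow{d} \mathcal{N}(0, \sigma_2^2)$ with $\sigma_2^2 = \tau_r^2/k$.

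\textbf{Comparing the variances.} The main step is to show $\sigma_1^2 \le \sigma_2^2$, i.e.\ $\tau_k^2 \le \tau_r^2$. A point uniform in an $m$-ball of radius $t$ has per-coordinate variance $t^2/(m+2)$, by isotropy and $\mathbb{E}\|\mathbf{y}\|^2 = \frac{m}{m+2}t^2$. Since $r > k$ forces $t_r \ge t_k$, monotonicity follows immediately. This is the step I expect to be most delicate. The difficulty is not the calculation but the modelling: we must justify that the $k$-NN set behaves like i.i.d.\ uniform draws on a ball of radius $t_k$, whereas in fact it is exactly the $k$ closest points and so is conditioned on the data. I would argue this via the Levina--Bickel Poisson-process picture. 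Conditioned on the radius $t_k$, the $k$ points inside the ball are i.i.d.\ uniform, and likewise for $t_r$. The comparison then reduces to $t_k \le t_r$, which holds deterministically.
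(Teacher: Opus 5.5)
Your proposal is correct and follows essentially the same route as the paper: it treats the coordinates of the $k$-NN points and of the sample $S$ as i.i.d.\ with mean $\mu_i$ under the local uniform model, applies the classical CLT to each coordinate average to get variances $\xi_1^2/k$ and $\xi_2^2/k$, and then compares them. The one difference is that the paper simply asserts $\xi_1^2 \le \xi_2^2$ because the $k$-NN points are ``more tightly distributed'' about $\boldsymbol{\mu}$, whereas you justify it through the per-coordinate variance $t^2/(m+2)$ of a uniform $m$-ball together with $t_k \le t_r$; this is the same radius-squared proportionality the paper itself uses later in Theorem~\ref{thm:variance_gap}.
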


\begin{proof}

To begin, recall that the $i$th co-ordinate of $\mathbf{c}$ and $\mathbf{\chat}$ is defined by the centroid as:

\[
\cnobold_i = \frac{\sum_{j=0}^{k-1}x_{j, i}}{k}
\]

\[
\chatnobold_i = \frac{\sum_{j=0}^{k-1}S_{j, i}}{k}
\]

We have assumed that the near neighbours of $\boldsymbol{\mu}$ are sampled uniformly from the $m$-ball centred at $\boldsymbol{\mu}$. For each fixed co-ordinate $i$, we model the co-ordinate values of points sampled within this locality as independent and identically distributed random variables (i.i.d) with mean at $\mu_i$. We use the variables $\xi_1^2$ and $\xi_2^2$ to represent the (unknown) variances of the underlying $i$th co-ordinate distributions of points in $\knnmu$ and $S$ respectively. Under the assumptions of our model, $\xi_1^2 \leq \xi_2^2$ as the points $\mathbf{x}_0 \cdots \mathbf{x}_{k-1}$ are more tightly distributed about $\boldsymbol{\mu}$ than those points in $S$. Using this assumption, we may directly apply the central limit theorem to as follows:

\[
\sqrt{k}(\frac{\sum_{j=0}^{k-1}x_{j, i}}{k} - \mu_i) \xrightarrow{d} \mathcal{N}(0, \xi_1^2)
\]

\[
\sqrt{k}(\frac{\sum_{j=0}^{k-1}S_{j, i}}{k} - \mu_i) \xrightarrow{d} \mathcal{N}(0, \xi_2^2)
\]

Rearranging, we reach the distributions:

\[
(\cnobold_i - \mu_i) \xrightarrow{d} \mathcal{N} \left(0, \frac{\xi_1^2}{k} \right)
\]

\[
(\chatnobold_i - \mu_i) \xrightarrow{d} \mathcal{N} \left(0, \frac{\xi_2^2}{k} \right)
\]

We denote $\sigma_1^2 = \frac{\xi_1^2}{k}$, and $\sigma_2^2 = \frac{\xi_2^2}{k}$ respectively.
\end{proof}

In Theorem \ref{thm:variance_gap}, we use the results derived in Theorem \ref{thm:clt} to prove that the co-ordinates of the vector $(\mathbf{\chat} - \mathbf{c})$ are also normally distributed, with a smaller variance than the co-ordinates of $(\boldsymbol{\mu} - \mathbf{c})$. 

\label{sec:var_diff}
\begin{theorem}
    The co-ordinates $(\chatnobold_i - \cnobold_i)$ are also normally distributed, and $Var(\chatnobold_i - \cnobold_i) < Var(\cnobold_i - \mu_i)$, if the terms $\cnobold_i - \mu_i$ and $\chatnobold_i - \mu_i$ have sufficient covariance. 
    \label{thm:variance_gap}
\end{theorem}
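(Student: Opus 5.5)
We build directly on Theorem \ref{thm:clt}, writing
\[
\chatnobold_i - \cnobold_i = (\chatnobold_i - \mu_i) - (\cnobold_i - \mu_i).
\]
Let $U = \cnobold_i - \mu_i$ and $V = \chatnobold_i - \mu_i$. Theorem \ref{thm:clt} gives the marginals $U \xrightarrow{d} \mathcal{N}(0,\sigma_1^2)$ and $V \xrightarrow{d} \mathcal{N}(0,\sigma_2^2)$. To say anything about $V - U$ we need the pair $(U,V)$ to be jointly (asymptotically) Gaussian, not merely Gaussian in each coordinate.

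\textbf{Joint normality.} Both $U$ and $V$ are normalised sums over points drawn from the same local uniform model about $\boldsymbol{\mu}$, and the sample $S$ overlaps $\knnmu$, since $S$ is drawn from the $r$-nearest neighbours, which contain the $k$-nearest. I would establish joint normality with the Cram\'er--Wold device: for arbitrary $(\alpha,\beta)$, write $\alpha U + \beta V$ as a normalised sum of i.i.d.-modelled coordinate contributions and apply the same CLT argument used in Theorem \ref{thm:clt}. This gives
\[
(U,V) \xrightarrow{d} \mathcal{N}\!\left(\mathbf{0}, \begin{pmatrix}\sigma_1^2 & \rho\\ \rho & \sigma_2^2\end{pmatrix}\right),
\qquad \rho = \operatorname{Cov}(U,V).
\]
The first claim of the theorem then follows immediately, since a linear functional of a Gaussian vector is Gaussian:
\[
\chatnobold_i - \cnobold_i = V - U \xrightarrow{d} \mathcal{N}\bigl(0,\ \sigma_1^2 + \sigma_2^2 - 2\rho\bigr).
\]
The mean is zero because both $U$ and $V$ are centred at $\mu_i$.

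\textbf{Variance comparison.} Comparing this variance with $Var(\cnobold_i - \mu_i) = \sigma_1^2$, the required inequality becomes
\[
\sigma_1^2 + \sigma_2^2 - 2\rho < \sigma_1^2 \iff \rho > \frac{\sigma_2^2}{2}.
\]
This inequality is the precise content of the hypothesis that the terms have ``sufficient covariance'', so I would state it explicitly as the condition of the theorem. By Cauchy--Schwarz, $\rho \le \sigma_1\sigma_2$, so the condition can only hold when $\sigma_2 < 2\sigma_1$. This matches the empirical observation that $\normchat$ degrades once $Y_m$ contains points much farther from $\mathbf{q}$ than $\knnq$ does.

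\textbf{Main obstacle.} The hard step is justifying that $\rho$ is large, or at least giving a model in which it is. My first attempt would use the overlap: if a fraction $p$ of $S$ lies in $\knnmu$, the shared summands contribute about $p\,\xi_1^2/k$ to the covariance. I would add to this the contribution arising because $\mathbf{c}$ and $\chat$ are both drawn toward the same local density gradient, which the uniform model ignores. Because the uniform model itself gives only the overlap term, I expect to leave the covariance condition as an assumption rather than derive it. I would support it with the $\kappa$ experiments of Section \ref{sec:estimating_c}.
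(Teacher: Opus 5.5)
Your proposal is correct and follows the paper's argument. The paper likewise writes $\chatnobold_i - \cnobold_i$ as the difference of the two centred quantities from Theorem~\ref{thm:clt} and obtains variance $\sigma_1^2 + \sigma_2^2 - 2\rho$. It then reads ``sufficient covariance'' as $\rho > \sigma_2^2/2$, which it writes as $\rho > \frac{r_2^2}{2r_1^2}\sigma_1^2$ by substituting $\sigma_2^2 = (r_2/r_1)^2\sigma_1^2$ from the uniform $m$-ball model. Your explicit joint-normality step fills in what the paper simply asserts when it convolves the two normals, and your overlap estimate $\rho \approx p\,\sigma_1^2$ matches the paper's later bound $\rho \geq \alpha\sigma_1^2$.
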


\begin{proof}
    As $(\cnobold_i - \mu_i)$ and $(\chatnobold_i - \mu_i)$ are each normally distributed, we may compute their convolution as follows:

    \[
    (\cnobold_i - \mu_i) - (\chatnobold_i - \mu_i) \xrightarrow{d} \mathcal{N}(0, \sigma^2_1 + \sigma^2_2 - 2 Cov(\cnobold_i - \mu_i, \chatnobold_i - \mu_i))
    \]

    Simplifying the left hand side:

    \begin{equation}
    (\cnobold_i - \chatnobold_i) \xrightarrow{d} \mathcal{N}(0, \sigma^2_1 + \sigma^2_2 - 2 Cov(\cnobold_i - \mu_i, \chatnobold_i - \mu_i))
    \label{eq:c_ci_dist}
    \end{equation}
    
    Previously, we assumed that the points are distributed uniformly and proportionally to the $m$-ball centred on $\mu$. Therefore, we may model the sets $\knnmu$ and $S$ as each having been drawn from spherical uniform distributions, where the former has been drawn from an $m$-ball with a radius $r_1$, and the second drawn from an $m$-ball with a radius $r_2$. In this example, $r_1$ is the $k$-nearest neighbour threshold, and $r_2$ is the $r$-nearest neighbour threshold.
    
    A property of the spherical uniform distribution is that the variance of co-ordinates of points drawn uniformly from the $m$-ball is directly proportional to the square of the radius of the ball. Therefore, as both the distributions are uniform and have the same centre, we may write the $\sigma_1^2$ as a ratio of $\sigma_2^2$

    \[
    \sigma_2^2 = \frac{r_2^2}{r_1^2}\sigma_1^2
    \]
    
    Using this fact, we may rewrite all terms of the distribution $(\chatnobold_i - \cnobold_i)$ dependent on $(\chatnobold_i-\mu_i)$ in terms of $(\cnobold_i - \mu_i)$:

    \[
    (\chatnobold_i - \cnobold_i) \xrightarrow{d} \mathcal{N}(0, \sigma^2_1 + \left(\frac{r_2}{r_1} \right)^2 \sigma^2_1 - 2 Cov(\cnobold_i - \mu_i, (\chatnobold_i - \mu_i))
    \]

    We define the term $\rho = Cov(\cnobold_i - \mu_i, \chatnobold_i - \mu_i)$:

    \[
    (\chatnobold_i - \cnobold_i) \xrightarrow{d} \mathcal{N}(0, \sigma^2_1 + \left(\frac{r_2}{r_1} \right)^2 \sigma^2_1 - 2 \rho )
    \]

    In order for $Var(\chatnobold_i - \cnobold_i) < Var(\cnobold_i - \mu_i)$, we require:

    \[
    \sigma^2_1 + \left(\frac{r_2}{r_1} \right)^2 \sigma^2_1 - 2 \rho <  \sigma_1^2
    \]

    Rearranging to make $\rho$ the subject, we see that $\rho$ must obey the following:

    \[
    \rho > \frac{r_2^2}{2r_1^2} \sigma_1^2
    \]

    If this condition holds, then $Var( \cnobold_i - \chatnobold_i) < Var(\cnobold_i - \mu_i)$ as required.

\end{proof}

\subsection{Discussion}
\label{sec:statistic_discussion}
Initially, it is obvious that:

\[
\mathbb{E}||\chat - \mathbf{c}||^2_2 < \mathbb{E}||\boldsymbol{\mu} - \mathbf{c} ||^2_2
\]

due to the variance result proved above. However, we wish to make some statements about the \emph{normalised} vectors $\normc$
, $\normchat$, and $\boldsymbol{\mu}$.

By normalising these vectors, we preserve only the angular component of the distance and discard any radial component. For example if $\chat$, $\mathbf{c}$, and $\normc$ are perfectly co-linear but have different $\ell_2$-norms, then there will be no angular error at all once all vectors are normalised, despite possibly large distances between $\mathbf{c}$ and $\chat$ in Euclidean space. Conversely, if $||\chat||_2$ or $||\mathbf{c}||_2$ is small, even small \emph{angular} changes betweem these vectors could result in large changes to the normalised distance $||\normc - \normchat||_2$.

Thus, showing $\mathbb{E}||\chat - \mathbf{c}||_2 < \mathbb{E}||\boldsymbol{\mu} - \mathbf{c} ||_2$ does not directly translate to distances between the normalised vectors. However, under the assumptions of our model, there is no reason to assume that error will be concentrated in the angular direction. Thus, $\normchat$ should retain its advantage over $\boldsymbol{\mu}$ unless adversarial conditions are in effect.

\subsection{Lower Bound on $\rho$}
To place a lower bound on $\rho$, we first assume that at least some fraction $\alpha k$ of the candidate set consists of true near neighbours. We therefore define a partition on $S$:

\[
    S = I \cup O
\]

where $I$ is the set of true near neighbours, whereby $|I| \geq \alpha k$ and $O$ is the set of non-near neighbours whereby $|O| < k(1 - \alpha)$. We now show that $\rho$ can be expressed in terms of $\alpha$ and $\sigma_1^2$:

\begin{align*}
    \rho &= Cov(\chatnobold_i - \mu_i, c_i - \mu_i)\\
    &= Cov \left( \frac{1}{k} \left[\sum_{\mathbf{y} \in I} y_i +  \sum_{\mathbf{z} \in O} z_i \right], \frac{1}{k}\sum_{\mathbf{x} \in \knnmu} x_i \right) \\
    &= \frac{1}{k^2}Cov \left(\left[\sum_{\mathbf{y} \in I} y_i +  \sum_{\mathbf{z} \in O} z_i \right],\sum_{\mathbf{x} \in \knnmu} x_i \right) \\
    &\texttt{By addition rule}:\\
    &= \frac{1}{k^2} \left( Cov \left( \sum_{\mathbf{y} \in I} y_i, \sum_{\mathbf{x} \in \knnmu} x_i \right) + \left( Cov \left( \sum_{\mathbf{z} \in O} z_i, \sum_{\mathbf{x} \in \knnmu} x_i \right) \right) \right)
\end{align*}

Here, we assume that the covariance between the sets $O$ and $\knnmu$ are at worst $0$. Under adversarial conditions it is a strong possibility this assumption will fail. However, for the embedding datasets we study, it is unlikely that a set of false positives with a negative covariance to $\knnq$ could be retrieved.

\begin{align*}
    &= \frac{1}{k^2} Cov \left( \sum_{\mathbf{y} \in I} y_i, \sum_{\mathbf{x} \in \knnmu} x_i \right)\\
    &= \frac{1}{k^2} Cov \left( \sum_{\mathbf{y} \in I} y_i, \sum_{\mathbf{y} \in I} y_i + \sum_{\mathbf{w} \in \knnmu \setminus I} w_i \right)\\
    &\texttt{By addition rule}:\\
    &= \frac{1}{k^2} \left( Cov \left( \sum_{\mathbf{y} \in I} y_i, \sum_{\mathbf{y} \in I} y_i \right) + Cov \left( \sum_{\mathbf{y} \in I} y_i, \sum_{\mathbf{w} \in \knnmu \setminus I} w_i \right)  \right)
\end{align*}

Here, we once again assume that members of the set $I$ and $\knnmu \setminus I$ have at worst $0$ correlation.

\begin{align*}
    &\geq \frac{1}{k^2} \left( Cov \left( \sum_{\mathbf{y} \in I} y_i, \sum_{\mathbf{y} \in I} y_i \right) \right) \\
    &\texttt{By variance identity} \\
    &\geq \frac{1}{k^2} Var  \left( \sum_{\mathbf{y} \in I} y_i \right) \\
    &\geq \frac{\alpha k \xi_1^2}{k^2} \\
    \rho&\geq \alpha \sigma^2_1
\end{align*}

As shown at the end of Section \ref{sec:var_diff}, we require:

\begin{align*}
    \rho &> \frac{r_2^2}{2r_1^2} \sigma_1^2 \\
\end{align*}

Therefore,

\begin{align*}
    \alpha \sigma^2_1 &> \frac{r_2^2}{2r_1^2} \sigma_1^2 \\
    \alpha &> \frac{r_2^2}{2r_1^2}
\end{align*}

\subsubsection{Discussion}
In a high-dimensional spaces we investigate, $r_1 \approx r_2$. To show this, we measure the mean value of $r_1$ at $k=100$, and $r_2$ at $k=500$ for $250$ queries, and show the results in Table \ref{tab:r-values}.

\begin{table}[h]
\centering
\begin{tabular}{lccc}
\toprule
& Glove & MirFlickr & GOOAQ \\
\midrule
$r_1$ & 0.956 & 0.944 & 0.787 \\
$r_2$ & 1.007 & 1.031 & 0.919 \\
$\alpha =\frac{r_2^2}{2r_1^2}$ & 0.555 & 0.596 & 0.681\\
\bottomrule
\end{tabular}
\caption{Values of $r_1$ and $r_2$ for each dataset, measured over $250$ randomly drawn queries.}
\label{tab:r-values}
\end{table}

In the worst case, the required $\alpha$ value for GooAQ is $0.68$. However, if we can assume that at least $\alpha k$ near neighbours are correct, we can improve this bound as we demonstrate in Section \ref{sec:tighter_bound}.

\subsection{A Tighter Bound}
\label{sec:tighter_bound}
If we can assume that at least a fraction $\alpha$ of elements in S are true near neighbours, then we can reason that:

\[
\sigma_2^2 = \frac{\alpha k}{k}\sigma_1^2 + \frac{r_2^2(k-\alpha k)}{r_1^2k} \sigma_1^2
\]

as at least a fraction $\alpha$ summands have variance $\sigma_1^2$. Simplified, this becomes:

\[
\sigma_2^2
= \sigma_1^2\left(\alpha+\frac{r_2^2(1-\alpha)}{r_1^2}\right)
\]

Substituting into Equation \ref{eq:c_ci_dist}:

\[
(\cnobold_i - \chatnobold_i) \sim \mathcal{N}(0, \sigma^2_1 + \sigma_1^2\left(\alpha+\frac{r_2^2(1-\alpha)}{r_1^2}\right) - 2 \rho)
\]

As before, we require the variance to be less than $\sigma_1^2$:

\[
\sigma^2_1 + \sigma_1^2\left(\alpha+\frac{r_2^2(1-\alpha)}{r_1^2}\right) - 2 \rho < \sigma_1^2
\]

Rearranged, this becomes:

\[
\rho > \frac{\sigma_1^2}{2}\left(\alpha+\frac{r_2^2(1-\alpha)}{r_1^2}\right)
\]

Substituting $\rho$ with $\alpha \sigma_1^2$:

\[
\alpha \sigma_1^2 \geq \frac{\sigma_1^2}{2}\left(\alpha+\frac{r_2^2(1-\alpha)}{r_1^2}\right)
\]

so,

\[
\alpha \geq \frac{1}{2}\left(\alpha+\frac{r_2^2(1-\alpha)}{r_1^2}\right)
\]

Rearranging:

\[
\alpha \geq \frac{r_2^2}{r_1^2 + r_2^2}
\]

Thus, using this new bound, the required values of $\alpha$, analogous to Table \ref{tab:r-values} are:

\begin{table}[h]
\centering
\begin{tabular}{lccc}
\toprule
& Glove & MirFlickr & GOOAQ \\
\midrule
$r_1$ & 0.956 & 0.944 & 0.787 \\
$r_2$ & 1.007 & 1.031 & 0.919 \\
$\alpha = \frac{r_2^2}{r_1^2 + r_2^2}$ & 0.526 & 0.544 & 0.577\\
\bottomrule
\end{tabular}
\caption{Values of $r_1$ and $r_2$, and the tighter bound for $\alpha$ for each dataset, measured over $250$ randomly drawn queries.}
\label{tab:r-values-2}
\end{table}





\section{Frequency of Candidate Set Updates}
\label{sec:update_frequency}
Here we measure how often RP-Forest adds points to its candidate set after the $i$th tree is searched, thereby estimating the rate at which the candidate set changes. If this value is below $\frac{k}{2}$, then the fast query centroid algorithm set out in Section \ref{sec:fast_centroid_cal} can be used to decrease the cost of computing $\normchat$.

To measure this property, we formulate an experiment whereby an RP-Forest formed of $19$ trees is constructed from each dataset. We then search the constructed forests using $100$ queries drawn and excluded from each dataset at random, before each forest was constructed. After each tree in the forest is searched, we measure the number of elements which were added to the candidate set. We report this metric as $\Delta-kNN$. We show the average $\Delta-kNN$ value after each tree is searched for all datasets in Figure \ref{fig:update_rate}.

We show that, on average, points are added to the candidate set infrequently. Critically, we show that no dataset adds points to the candidate set at a rate of above $\frac{k}{2}$ per tree beyond the first. Therefore the fast centroid computation algorithm (described in Section \ref{sec:fast_centroid_cal}) can be used to decrease the cost of computing $\normchat$.

\begin{figure}[h]
    \centering
    \begin{subfigure}{0.48\textwidth}
        \centering
        \includegraphics[width=\linewidth]{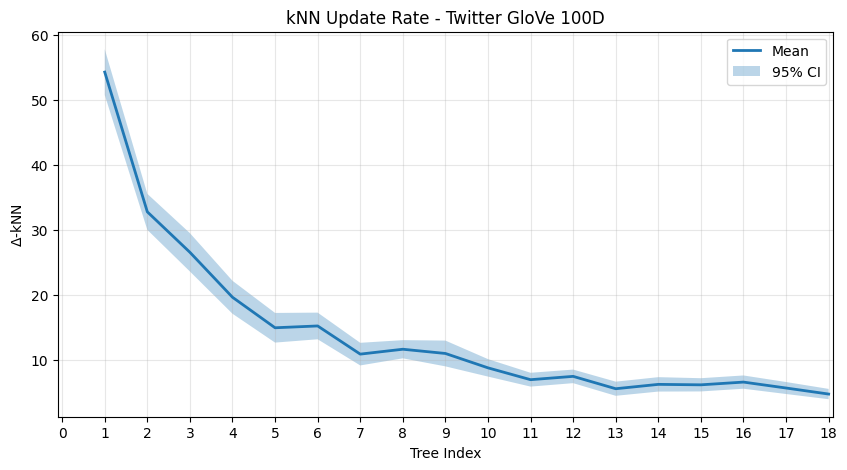}
        \caption{GloVe}
        \label{fig:update_rate_glove}
    \end{subfigure}
    \hfill
    \begin{subfigure}{0.48\textwidth}
        \centering
        \includegraphics[width=\linewidth]{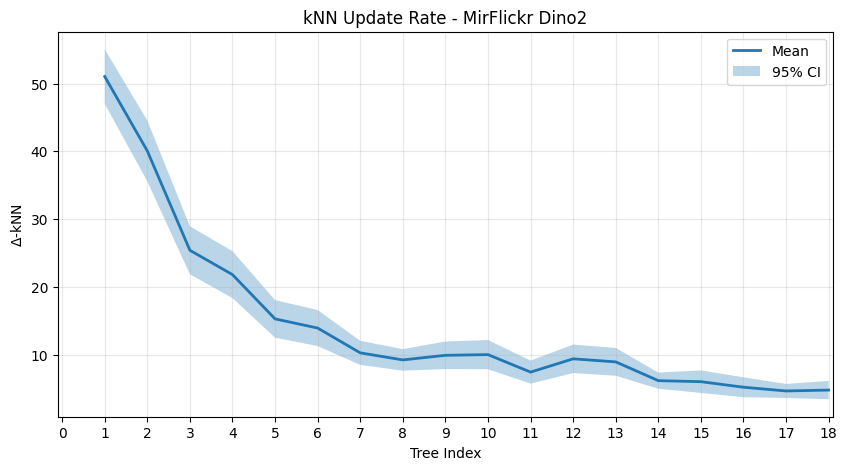}
        \caption{DINOv2}
        \label{fig:update_rate_dino2}
    \end{subfigure}

    \par\medskip

    \begin{subfigure}{0.48\textwidth}
        \centering
        \includegraphics[width=\linewidth]{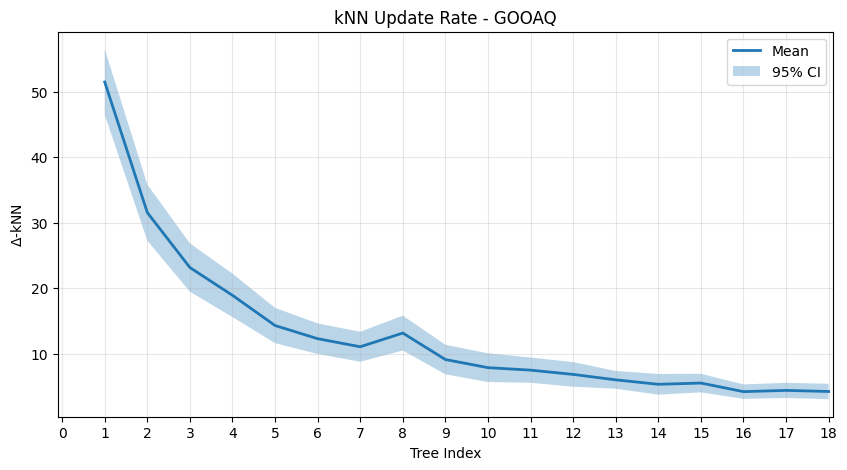}
        \caption{GooAQ}
        \label{fig:update_rate_gooaq}
    \end{subfigure}
    \hfill
    \begin{subfigure}{0.48\textwidth}
        \centering
        \includegraphics[width=\linewidth]{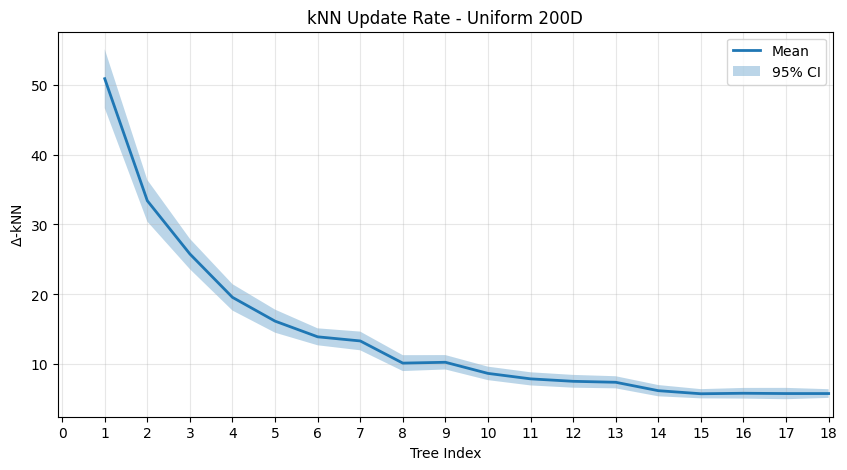}
        \caption{Uniform 200D}
        \label{fig:update_rate_uniform}
    \end{subfigure}

    \caption{Rate of change of the candidate set after trees in an RP-Forest are searched. $k=100$.}
    \label{fig:update_rate}
\end{figure}

\end{document}